\documentclass[12pt,letterpaper]{article}
\usepackage[utf8]{inputenc}
\usepackage[top=1in, bottom=1in, left=1in, right=1in]{geometry}
\usepackage{amsmath}
\usepackage{amsthm}
\usepackage{amssymb}
\usepackage{graphicx}
\usepackage{hyperref}
\usepackage{tabu}
\usepackage{longtable}
\usepackage{booktabs}
\usepackage{dsfont}
\usepackage{tabularx}
\usepackage{fancyhdr}
\usepackage{enumitem}
\usepackage{multirow}
\usepackage{multicol}
\usepackage[bottom]{footmisc}
\usepackage{bm}
\usepackage{float}
\usepackage[english]{babel}
\usepackage{color}
\usepackage{verbatim}
\usepackage{mathtools}
\usepackage{esvect}
\usepackage{mathabx}
\usepackage{blkarray}
\usepackage{blindtext}
\usepackage{algorithm} 
\usepackage{algorithmic} 
\usepackage{xcolor}
\usepackage{natbib}
\usepackage{setspace}
\usepackage[toc,page]{appendix}
\usepackage{siunitx}
\usepackage{authblk}
\usepackage{soul}
\newtheorem{theorem}{Theorem}
\newtheorem{assumption}{Assumption}
\newtheorem{lemma}{Lemma}

\newtheorem{proposition}{Proposition}

\newcommand{\Var}{\text{Var}}

\newcommand{\real}{\mathbb{R}}

\newcommand{\vecx}{\mathbf{x}}

\newcommand{\vecu}{\mathbf{u}}

\setcitestyle{citesep={,}}

\allowdisplaybreaks 

\title{Wilcoxon Random Forests for Robust Distributional Prediction}
\author[1]{Danni Shi\thanks{dshidanni@outlook.com}}
\author[1]{Bryan E. Shepherd\thanks{bryan.shepherd@vanderbilt.edu}}
\author[2]{Chun Li\thanks{cli77199@usc.edu}}

\affil[1]{Department of Biostatistics, Vanderbilt University}
\affil[2]{Department of Population and Public Health Sciences, University of Southern California}
\date{}

\begin{document}
\maketitle
\begin{abstract}
Random forests (RF) are tree-based models that capture complex, high-order predictor interactions. In biomedical studies, ordered outcomes often have outliers, skewness, or mixtures of continuous and ordinal values (e.g., due to detection limits). Standard RF splitting targets conditional means via squared errors and is sensitive to such irregularities. We propose a Wilcoxon regression tree that selects splits by maximizing a rank-based impurity reduction, equivalent to maximizing the squared Wilcoxon rank-sum statistic.
Because the criterion depends only on outcome ranks, it is invariant to monotone transformations and naturally accommodates continuous, ordinal, or mixed outcomes. We develop the Wilcoxon random forest (WRF), which aggregates Wilcoxon regression trees via subsampling and estimates conditional distributions using forest-weighted empirical CDFs. We establish consistency of the WRF distribution estimator under regularity conditions. 
We evaluate distributional prediction using calibration diagnostics and continuous ranked probability scores, and define an out-of-bag permutation variable-importance measure.
Simulations show that the WRF performs comparably to the standard quantile regression forest when errors are symmetric and homoscedastic, and yields improved quantile estimation and calibration when outcomes are skewed or heteroscedastic. We apply the WRF to predict CD4 cell count and HIV viral load six months after antiretroviral therapy initiation in a multicenter Latin American cohort, and the WRF improves threshold probability estimation and accommodates the large mass at the viral load detection limit.
\end{abstract}

\textbf{Keywords:} distributional prediction; random forest; rank-based method; Wilcoxon random forest.

\newpage
\section{Introduction}
Random forests (RF) are widely used in biomedical research for prediction, classification and variable selection in settings with complex covariate effects and high-order interactions \citep{breiman2001}. In regression RFs, trees are usually grown by recursive binary partitioning of the covariate space, with splits chosen to reduce within-node sum of squared errors (SSE). This splitting criterion effectively targets conditional mean estimation and can be sensitive to skewness, outliers, and heteroscedasticity, which are common in biomedical studies. Moreover, many biomarkers are recorded on ordinal scales or subject to lower and/or upper detection limits such that the observed outcome is a mixture of continuous and ordered point masses. These settings motivate the need for random forest methods that predict conditional distributions while remaining robust to outcome scales and extreme observations.

Our motivating application is the prediction of treatment response among people living with HIV (PLWH) six months after the initiation of antiretroviral therapy (ART). Consider two important biomarkers for HIV: CD4 cell count and HIV-1 RNA viral load (VL). CD4 cell count is often skewed, and VL is highly skewed and subject to assay detection limits. Prediction of these outcomes requires methods that can estimate conditional distributions for continuous and mixed continuous-ordinal outcomes while remaining robust to skewness and outliers.

Quantile regression forests (QRF; \citealp{meinshausen2006}) extend RFs by using forest-induced nearest-neighbor weights to estimate conditional quantiles and potentially the full conditional distribution. 
The prediction is distributional, but the trees are still grown by SSE splitting and the partitioning remains mean-oriented. 
Quantile regression trees \citep{chaudhuri2002nonparametric} and pinball-loss-based splitting rules \citep{bhat2015} directly target quantiles, but they remain scale-dependent and require the quantile level(s) of interest to be specified before the tree is grown.

In this paper, we propose a Wilcoxon splitting criterion for generating tree structures. With this criterion, a split is chosen to maximize separation between the outcome ranks in the two child nodes: this criterion is equivalent to maximizing the squared Wilcoxon rank-sum statistic among all candidate splits.
Because this criterion depends only on the relative ordering of the outcomes, it is invariant to strictly monotone transformations and naturally accommodates skewed, heavy-tailed, ordinal, and mixed-type outcomes. 
The resulting partition targets differences in the relative ordering of the outcomes rather than differences in raw conditional means, and our splitting criterion does not require prespecifying quantile level(s).

Once we have obtained a tree structure, the Wilcoxon regression tree is defined with the empirical cumulative distribution function (ECDF) per terminal node. 
We then develop the Wilcoxon random forest (WRF), which aggregates Wilcoxon regression trees built on subsamples and estimates the conditional CDF using forest-weighted empirical distributions. Under regularity conditions, we show that the estimated conditional CDF from the WRF is consistent for various outcome types, including continuous, ordinal, and mixed outcomes.

The remainder of the paper is organized as follows. Section 2 reviews splitting criteria based on SSE and pinball loss, introduces our rank-based splitting criterion, defines the Wilcoxon regression tree, and describes an optional semi-parametric refinement for single-tree prediction.
Section 3 presents the WRF algorithm for predicting conditional distributions and describes its properties and various evaluation metrics. Section 4 demonstrates the performance of the WRF through simulations. Section 5 reports the HIV application, and Section 6 concludes with a brief discussion.

\section{Rank-Based Wilcoxon Regression Tree}

\subsection{Review of Splitting Criteria: Squared Errors and Pinball Loss}
In the Classification and Regression Tree (CART) framework \citep{CARTbook}, regression trees are constructed via recursive binary splitting of the covariate space $\mathcal{X} \subseteq \real^p$ using a training sample $\mathcal{D}_n= \{(\mathbf{X}_i, Y_i): i=1, \ldots, n\}$ drawn i.i.d. from the joint distribution of $(\mathbf{X}, Y)$, where $\mathbf{X}=(X_1,\ldots,X_p)^\top$ denotes the predictor vector.
At any node $A \subseteq \mathcal{X}$, a candidate split is defined by a covariate $j \in \{1,\ldots,p\}$ and a threshold $s$, together denoted as $(j,s)$, and it partitions $A$ into two child regions: $A_L(j,s)= \{\vecx\in A: x_j \leq s\}$ and $A_R(j,s) = \{\vecx \in A: x_j > s\}$.
For unordered categorical $X_j$ with $L$ levels, the standard CART implementation orders the levels within the current node by their within-level outcome means, and considers the $L-1$ splits induced by this ordering \citep{CARTbook}. The selected split is the one that maximizes a prespecified impurity reduction.

For regression trees, the classical CART impurity is the sum of squared errors (SSE). If a node $A$ of size $n_A$ is split at $(j,s)$ with child nodes $A_L(j,s)$ and $A_R(j,s)$, the reduction in SSE is 
\begin{equation}
\label{eq:loss_SSE}
\Delta_{\text{SSE}}(j,s)= \sum_{i:\mathbf{X}_i\in A } (Y_i- \bar{Y}_A)^2- \left[ \sum_{i: \mathbf{X}_i \in A_L (j,s)} (Y_i -\bar{Y}_L)^2+ \sum_{i: \mathbf{X}_i \in A_R (j,s)} (Y_i -\bar{Y}_R)^2 \right],
\end{equation}
where $\bar{Y}_A$, $\bar{Y}_L$ and $\bar{Y}_R$ are the outcome means in the parent, left child, and right child nodes, respectively.  The optimal split $(j^*, s^*)$ maximizes $\Delta_{\text{SSE}}(j,s)$. The SSE-based splitting is used in standard CART regression trees, random forests, and quantile regression forests (QRF; \citealp{meinshausen2006}). This criterion is well suited for separating conditional means, but it is scale-dependent and can be sensitive to skewness, outliers and heteroscedasticity.

To directly target conditional quantile estimation, \cite{bhat2015} proposed a splitting criterion based on the pinball loss (also known as the quantile loss or check function). Given a specified quantile level $\tau\in (0,1)$, the pinball loss is defined as
\begin{align*}
  \rho_\tau (y, \hat{q}_\tau)= \begin{cases}
  \tau (y-\hat{q}_\tau) &\text{if } y\geq \hat{q}_\tau, \\
  (1-\tau) (\hat{q}_\tau- y) &\text{if } y< \hat{q}_\tau, \end{cases}
\end{align*}
where $\hat{q}_\tau$ is the predicted $\tau$-th quantile \citep{koenker1978}. The corresponding pinball-loss-based impurity reduction is:
\begin{align*}
\Delta_{\text{pinball}, \tau} (j,s)=  \sum_{i:\mathbf{X}_i\in A} \rho_{\tau} (Y_i, \hat{q}_{\tau A}) - \left[ \sum_{i: \mathbf{X}_i \in A_L(j,s)} \rho_{\tau} (Y_i, \hat{q}_{\tau L})+ \sum_{i: \mathbf{X}_i \in A_R(j,s)} \rho_{\tau} (Y_i, \hat{q}_{\tau R}) \right],
\end{align*}
where $\hat{q}_{\tau A}$, $\hat{q}_{\tau L}$ and $\hat{q}_{\tau R}$ are the empirical $\tau$-th quantiles (from a quantile function) of the outcomes in the parent, left child, and right child nodes, respectively. Multiple quantile levels can be handled by maximizing the sum of $\Delta_{ \text{pinball}, \tau} (j, s)$ over a prespecified set of $\tau$'s. 

While the pinball-loss-based splitting directly targets conditional quantiles and is less mean-oriented than SSE, it relies on the prespecified quantile level(s) and it still depends on the outcome scales. As a result, it may be affected by a few extreme observations. In addition, targeting multiple quantile levels requires repeated empirical-quantile calculations at each candidate split.

\subsection{Wilcoxon Rank-Based Node Splitting}
\label{sec:2_2-wilcoxsplit}
We propose a rank-based splitting criterion that is invariant to the outcome scale and is naturally applicable to continuous, ordinal, or mixed-type outcomes. 
Given a parent node $A$ of size $n_A$, let $r_{i,A}$ denote the midrank of $Y_i$ among the outcomes $\{Y_i: \mathbf{X}_i \in A\}$. For a candidate split $(j,s)$, define
\begin{align*}
    \bar{r}_A =\frac1{n_A}\sum_{i:\mathbf{X}_i\in A} r_{i,A}, \quad
    \bar{r}_L =\frac1{n_L}\sum_{i:\mathbf{X}_i\in A_L(j,s)} r_{i,A}, \quad
    \bar{r}_R =\frac1{n_R}\sum_{i:\mathbf{X}_i\in A_R(j,s)} r_{i,A}, 
\end{align*}
where $n_L$ and $n_R$ are the left and right child node sizes, and $\bar{r}_A$, $\bar{r}_L$ and $\bar{r}_R$ are the average midrank in the parent, left child, and right child nodes, respectively, 
We define the rank-based impurity reduction as
\begin{equation}
\label{eq:loss_rank}
\Delta_{\text{rank}}(j, s)= \sum_{i:\mathbf{X}_i\in A} (r_{i,A}- \bar{r}_A)^2 - \left[ \sum_{i: \mathbf{X}_i \in A_L (j,s)} (r_{i,A} -\bar{r}_{L})^2+ \sum_{i: \mathbf{X}_i \in A_R (j,s)} (r_{i,A} -\bar{r}_{R})^2 \right].
\end{equation}
This has the same form as the SSE-based impurity reduction in (\ref{eq:loss_SSE}), but it is computed on the node-specific midranks rather than on the raw $Y_i$ values. Thus, a split is favored when it produces child nodes with well-separated outcome ranks.

We call $\Delta_{\text{rank}}(j,s)$ the Wilcoxon splitting criterion. This terminology is motivated by the fact that the criterion in (\ref{eq:loss_rank}) is equivalent to a Wilcoxon rank-sum / Mann-Whitney statistic. To formalize this equivalence, for a candidate split $(j,s)$, consider the normalized Mann-Whitney $U$ statistic for comparing the outcomes in the left and right child nodes:
\begin{align*}
U_A(j,s)= \frac{1}{n_L n_R} \sum_{i: \mathbf{X}_i \in A_L (j, s)} \sum_{k: \mathbf{X}_k \in A_R (j, s)} \left[I(Y_i >Y_k)+ \frac{1}{2} I(Y_i =Y_k) \right].
\end{align*}
Under the null hypothesis that the left and right child node outcomes have the same distribution, $U_A(j,s)$ can be normalized as
\begin{align*}
z(j,s)= \frac{U_A(j,s)- \frac{1}{2}}{ \sqrt{\Var (U_A(j,s))}}, \text{ where } \Var[U_A(j,s)]= \frac{n_A+1- \frac{\kappa}{ n_A (n_A-1)}}{ 12 n_L n_R},
\end{align*}
and $\kappa= \sum_{k=1}^K (t_k^3-t_k)$ is the tie-correction factor for $K$ unique tied-rank groups of sizes $t_1, \ldots, t_K$ in the parent node. Since $z(j,s)\xrightarrow{d} N(0,1)$, the two-sided Wilcoxon test $p$-value can be computed as $p_W (j,s)= 2[1- \Phi(|z(j,s)|)]$, where $\Phi(\cdot)$ is the cumulative distribution function (CDF) of the standard normal distribution. 

\begin{proposition}[Equivalence of rank-based splitting and the Wilcoxon rank-sum test]
\label{thm:prop1}
The rank-based impurity reduction (\ref{eq:loss_rank}) can be expressed as 
\begin{equation}
\label{eq:loss_rank_equiv}
\Delta_{\text{rank}} (j,s) = n_A n_L n_R \left[ U_A(j,s)- \frac{1}{2}\right]^2
=\frac{n_A}{12} \left[ n_A+1- \frac{\kappa}{ n_A(n_A-1)} \right] z(j,s)^2.
\end{equation}
\end{proposition}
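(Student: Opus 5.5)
The plan is to reduce $\Delta_{\text{rank}}(j,s)$ to a between-group sum of squares and then rewrite that in terms of the left-node rank sum. The standard ANOVA decomposition, applied to the midranks $r_{i,A}$, gives
\begin{equation*}
\Delta_{\text{rank}}(j,s)= n_L(\bar r_L-\bar r_A)^2+n_R(\bar r_R-\bar r_A)^2 .
\end{equation*}
Since $n_L\bar r_L+n_R\bar r_R=n_A\bar r_A$, we have $\bar r_R-\bar r_A=-(n_L/n_R)(\bar r_L-\bar r_A)$. Substituting, the expression collapses to
\begin{equation*}
\Delta_{\text{rank}}(j,s)=\frac{n_L n_A}{n_R}\,(\bar r_L-\bar r_A)^2 .
\end{equation*}
We also use the fact that midranks sum to $n_A(n_A+1)/2$ regardless of ties, so $\bar r_A=(n_A+1)/2$.

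Next, connect $\bar r_L$ to $U_A(j,s)$. For each $i$ with $\mathbf{X}_i\in A$, the midrank is
\begin{equation*}
r_{i,A}=\sum_{k:\mathbf{X}_k\in A}\Big[I(Y_i>Y_k)+\tfrac12 I(Y_i=Y_k)\Big]+\tfrac12 ,
\end{equation*}
where the $k=i$ term contributes $1/2$, giving the $1$ in the usual formula. Summing over $i$ in $A_L$ and splitting the inner sum into $k\in A_L$ and $k\in A_R$ handles the two parts separately. Within $A_L$, the pairwise terms sum symmetrically to $n_L^2/2$, so the $A_L$ part together with the $n_L/2$ gives $n_L(n_L+1)/2$. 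The cross part equals $n_Ln_RU_A(j,s)$. Hence
\begin{equation*}
n_L\bar r_L=\frac{n_L(n_L+1)}{2}+n_Ln_RU_A(j,s).
\end{equation*}
Dividing by $n_L$ and subtracting $(n_A+1)/2$ yields $\bar r_L-\bar r_A=n_R\,(U_A-\tfrac12)$, using $n_L-n_A=-n_R$. Plugging this into the collapsed form gives $\Delta_{\text{rank}}=n_An_Ln_R(U_A-\tfrac12)^2$, which is the first equality.

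For the second equality, substitute $(U_A-\tfrac12)^2=z^2\Var[U_A]$ with the stated variance formula. The $n_Ln_R$ factors cancel, leaving $\frac{n_A}{12}\big[n_A+1-\kappa/(n_A(n_A-1))\big]z^2$. This step is purely algebraic, since the variance formula is taken as given in the definition of $z(j,s)$.

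The main obstacle is the midrank identity with ties, specifically making sure the self-term and the within-$A_L$ tie contributions exactly produce $n_L(n_L+1)/2$. I will check this via the antisymmetry $[I(Y_i>Y_k)+\tfrac12 I(Y_i=Y_k)]+[I(Y_k>Y_i)+\tfrac12 I(Y_k=Y_i)]=1$ for every pair $i\neq k$.
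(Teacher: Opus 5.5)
Your proposal is correct and follows essentially the paper's route: the ANOVA between-group decomposition, $\bar r_A=(n_A+1)/2$, expressing the left-node midrank sum through $U_A(j,s)$, and then substituting the variance formula to get the $z(j,s)^2$ form. The paper works with $\bar r_L-\bar r_R$ and simply quotes the identity $S_L=\tfrac12 n_L(n_L+1)+n_Ln_RU_A(j,s)$ as the usual formula for $U_A$, whereas you work with $\bar r_L-\bar r_A$ and derive that identity, ties included, from the pairwise midrank representation — a small gain in rigor over the paper.
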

The proof is in Supplementary~S1.2. Equation (\ref{eq:loss_rank_equiv}) implies that, within a fixed parent node, $\Delta_{\text{rank}} (j,s)$ is a strictly increasing function of $z(j,s)^2$. Since $p_W(j,s)$ is a strictly decreasing function of $z(j,s)^2$, the optimal split $(j^*, s^*)$ satisfies
\begin{align*}
(j^*, s^*)= \arg \max_{j,s} \Delta_{\text{rank}} (j,s) = \arg \max_{j,s} z(j,s)^2 = \arg \min_{j,s} p_W(j,s).
\end{align*}
The $p$-value is used only as an equivalent split score, not as a formal hypothesis test.

For a numerical predictor $X_{j}$, following the standard CART practice \citep{CARTbook, scornetetal2015}, we define the threshold of a split to be the midpoint between consecutive distinct values of $X_{j}$ within the parent node. This convention ensures that no split boundary coincides with an observed value, improving boundary precision for future observations. For an unordered categorical predictor $X_j$ with $L$ levels in node $A$, we order the levels by their within-level mean midranks in $A$ and consider the $L-1$ splits induced by this ordering.

To regularize the tree structure, we can impose standard constraints such as minimum node size, maximum tree depth, maximum number of terminal nodes, and minimum reduction of rank-based impurity \citep{CARTbook}.
A rank-based analog of cost-complexity pruning is described in Supplementary Section~S1.1.

Computationally, for an ordered predictor, searching over candidate Wilcoxon splits has the same $O(n \log n)$ order as the corresponding CART split search, with additional computation required to obtain the node-specific outcome midranks.

\subsection{Wilcoxon Regression Tree and Distributional Prediction}
\label{sec:2_3-pred}

After the recursive binary splitting and the optional pruning, we have a rank-based partition of the covariate space into terminal nodes.  A distributional prediction model conditional on covariates can then be constructed by estimating a predicted outcome distribution for each terminal node of the partition.  A simple approach is to predict the distribution within each node using the ECDF of the observations that fall into the node.  We call the resulting tree model the Wilcoxon regression tree.

The leaf-wise ECDFs in the Wilcoxon regression tree can be highly variable, particularly for small leaves.  To improve efficiency, we can refine it by modeling the leaf-wise conditional distributions using a semiparametric, rank-based model such as the cumulative probability model (CPM). Specifically, for a tree structure $T$, let $\mathcal{L}(T)=\{\ell_1,\ell_2,\ldots,\ell_{M_T}\}$ denote its terminal nodes, where $M_T=|\mathcal{L}(T)|$. Let $I_m(\vecx)=I(\vecx\in \ell_m)$ indicate that $\vecx$ falls in terminal node $\ell_m$ ($m=2,\ldots,M_T$). The conditional CDF can be expressed as
\begin{align*}
G[F(y\mid \vecx)]= \alpha(y)- \sum_{m=2}^{M_T} \beta_m I_m(\vecx) ,
\end{align*}
where $G(\cdot)$ is a specified link function (e.g., logit), $\beta_m$'s are regression coefficients for the node indicators, and $\alpha(y)$ is an intercept function that will be estimated nonparametrically. In this formulation, the link-transformed conditional CDF is assumed to share a common, nonparametrically estimated shape, $\alpha(y)$, but differ via a location shift across nodes.  As a rank-based method, the CPM is closely related to Wilcoxon rank-sum test \citep{mccullagh1980, tian2024}, and can be used for any orderable outcome including continuous, ordinal discrete, or mixed type outcomes \citep{liu2017}. 
Because the CPM borrows information across leaves, the resulting tree model, which we call the CPM regression tree, can give more stable predicted distributions than the fully nonparametric empirical estimates from the Wilcoxon regression tree.

\section{Wilcoxon Random Forest}
\label{sec:3-WRF}

\subsection{Algorithm for Wilcoxon Random Forest}
\label{sec:3_1-WRFalgo}

The Wilcoxon Random Forest (WRF) is an ensemble of rank-based Wilcoxon regression trees described in Section 2.  Our goal is to construct a rank-based random forest approach for estimating the full conditional distribution given covariates.

Let $\{(\mathbf{X}_i, Y_i)\}_{i=1}^n$ denote the training data, where $Y_i$ is the response variable and $\mathbf{X}_i$ is a vector of $p$ covariates.  We first grow $B$ Wilcoxon regression trees, $T_1, \ldots, T_B$, independently.  Each tree structure is grown as follows:
\begin{enumerate}
\item For tree $T_b$, let $\mathcal{I}_b \subset \{1,\ldots,n\}$ denote the indices of a subsample of size $\lceil (1-e^{-1})n\rceil\approx0.632n$, drawn \textit{without} replacement. Grow $T_b$ using the observations indexed by $\mathcal I_b$.

\item At each node, randomly select $\textit{mtry}$ candidate predictors from the $p$ predictors. 

\item For each selected predictor and each candidate threshold, compute $\Delta_{\text{rank}}$ as in (\ref{eq:loss_rank}). 

\item Choose the split $(j^*, s^*)$ that maximizes $\Delta_{\text{rank}} (j,s)$ and partition the node accordingly. 

\item Repeat steps 2--4 until a stopping criterion is met.
\end{enumerate}

Tree growth is regularized by standard stopping criteria such as a minimum terminal node size, a maximum tree depth, or a minimum rank-based impurity reduction.
Here, each tree is grown on a subsample drawn without replacement rather than on a bootstrap sample. 
We use subsampling so that each selected observation contributes only once to the splitting criterion and once to the CDF estimation. Subsampling has been used as an alternative to bootstrap sampling in tree ensembles \citep{BuhlmannYu2002}.
We set the subsample size to $\lceil (1-e^{-1})n \rceil \approx 0.632n$ to match the expected number of distinct observations in a bootstrap sample. Other subsampling fractions are acceptable as long as they contain sufficient observations.

Once the tree structure has been grown, the corresponding tree model for the conditional CDF is just the ECDF for each terminal node.  Specifically, given a new covariate $\vecx$, let $L_b(\vecx)$ denote the leaf in tree $T_b$ that contains $\vecx$, and $N_b(\vecx) = \sum_{i=1}^n I(\mathbf{X}_i \in L_b(\vecx))$ denote the number of training observations falling in that leaf. Write $F(y \mid \vecx) = P(Y \leq y \mid \mathbf{X}= \vecx)$.
The tree-level conditional CDF estimator given $\vecx$ from $T_b$ is
\begin{equation}
\label{eq:tree_ECDF}
\hat{F}_b(y\mid \vecx) =\frac{1}{ N_b(\vecx)} \sum_{i=1}^n I[\mathbf{X}_i\in L_b(\vecx)] \cdot I(Y_i \leq y),
\end{equation}
where $I(\cdot)$ is the indicator function.
The WRF conditional CDF estimator averages the tree-level estimators:
\begin{equation}
\label{eq:ECDF}
\hat{F}(y\mid \vecx) = \frac{1}{B} \sum_{b=1}^B \hat{F}_b(y\mid \vecx) =\sum_{i=1}^n w_i(\vecx) \cdot I(Y_i \leq y),
\end{equation}
where
\begin{equation}
\label{eq:ECDF_weight}
w_i(\vecx)= \frac{1}{B} \sum_{b=1}^B \frac{ I[\mathbf{X}_i \in L_b(\vecx)]}{N_b(\vecx)}.
\end{equation}
These forest weights are nonnegative and satisfy $\sum_{i=1}^n w_i(\vecx)=1$. 
The WRF is different from the QRF by \cite{meinshausen2006} in tree building, in which we rely on Wilcoxon splitting criterion (\ref{eq:loss_rank}) instead of (\ref{eq:loss_SSE}) and we use subsampling instead of the bootstrap.  Both WRF and QRF construct empirical CDFs in individual trees and then average them over all the trees.  As with all random forests, WRF is an adaptive nearest-neighbor approach \citep{linjeon2006, scornetetal2015}, with its neighborhoods defined by rank-based Wilcoxon partitions.

With the estimated conditional CDF (\ref{eq:ECDF}), we can estimate conditional quantiles and conditional means.  The conditional $\tau$ quantile for $0<\tau<1$ can be obtained as
\begin{equation}
\label{eq:quantile}
\hat{Q}_\tau (\vecx)= \inf \{y: \hat{F}(y\mid \vecx) \geq \tau\}.
\end{equation}
Other approaches for obtaining quantiles from the estimated CDF can be found in \cite{hyndmanfan1996}, \cite{harrell1982new}, \cite{yang1985smooth}, \cite{sheather1990kernel}, and \cite{tian2024}. When all the outcomes $Y$ are fully observed, we can also estimate the conditional mean as
\begin{align*}
\hat{\mu}(\vecx)= \sum_{i=1}^n w_i(\vecx) Y_i.
\end{align*}

Since our node splitting criterion is rank-based with respect to the outcome, the WRF can be viewed as a rank-based nonparametric regression method. It is invariant to strictly increasing transformations of the outcome in the following sense: if $Z=h(Y)$ is a strictly increasing transformation of $Y$, then, under the same forest randomization and tuning parameters, the WRF fitted to $Z$ has exactly the same forest structure as the WRF fitted to $Y$, including the same splits, leaf memberships and thus, the same forest weights. Consequently, if $\hat{F}_Z( h(y) \mid \vecx)$ is the estimated CDF from the WRF of $Z$ on the covariates, then $\hat{F}_Z(h(y)\mid\ \vecx) =\hat{F}(y\mid \vecx)$ for any $y$.

Under the conditions stated in Supplementary Section~S1.3, for each fixed prediction point $\vecx$, the WRF distribution estimator in \eqref{eq:ECDF} satisfies
$$\sup_{y \in \real} \left| \hat{F} (y \mid \vecx)- F(y \mid \vecx) \right| \xrightarrow{p} 0.$$
The proof is provided in Supplementary Section~S1.3. This consistency result holds for any orderable outcome including continuous, count, ordinal, or mixture-type (e.g., continuous data subject to detection limits) outcomes.

While the CPM refinement in Section~\ref{sec:2_3-pred} can stabilize the noisy leaf-wise ECDFs of a single tree model, we do not use CPM-refined trees in the WRF. At the forest level, averaging ECDF-based trees already provides stabilization and yields the weighted empirical CDF estimator in \eqref{eq:ECDF}. Fitting a CPM on each tree would add additional assumptions and substantially increase computational cost. In exploratory simulations reported in Supplementary Section~S2.4, CPM refinement produced small, mixed changes in performance and did not provide a consistent overall improvement over the ECDF-based WRF.

\subsection{Evaluation Metrics and Variable Importance}
Because the WRF estimates a full conditional distribution, we evaluate its prediction performance at the distributional level.
The evaluation can be carried out on an independent test set, through cross-validation, or using out-of-bag (OOB) prediction.
For cross-validation, each observation is predicted by a model fitted using data outside its validation fold.
For OOB prediction of training observation $i$, let $\mathcal{B}_i^{\text{OOB}} = \{b: i \not\in \mathcal{I}_b\}$ denote the trees where training observation $i$ is out-of-bag. For each $b\in \mathcal{B}_i^{\text{OOB}}$, we compute the predicted CDF $\hat{F}_b(y \mid \mathbf{X}_i)$.
The OOB predicted distribution for observation $i$ is the average of these tree-level CDFs over $b \in \mathcal{B}_i^{\text{OOB}}$.
We use calibration diagnostics and proper scoring rules to summarize distributional predictive accuracy. In the context of distributional prediction, calibration refers to the coherence between the predicted distributions and the observed outcomes \citep{gneiting2007}.

For simplicity, we describe evaluation metrics in the context of an independent test set.
Let $\{(\mathbf{X}_k, Y_k)\}_{k=1}^m$ denote the observations in the test set, and let $\hat{F}_k(y) = \hat{F}(y \mid \mathbf{X}_k)$ denote their predicted CDFs. 
For fully observed continuous outcomes, distributional calibration can be assessed using the probability integral transform (PIT): $\mathrm{PIT}_k = \hat{F}_k(Y_k)$. Under a well-calibrated predictive distribution, PIT values should be approximately distributed as $\text{Uniform} (0,1)$ \citep{diebold1998evaluating, gneiting2007}. We can assess the overall calibration by quantifying the overall deviation of $\mathrm{PIT}_k$'s from $\text{Uniform} (0,1)$ using the Cram\'er-von Mises (CvM) statistic:
\begin{align*}
W^2= \sum_{k=1}^m \left( \mathrm{PIT}_{(k)}- \frac{2k-1}{2m} \right)^2 + \frac{1}{12m},
\end{align*}
where $\mathrm{PIT}_{(1)} \leq \cdots \leq \mathrm{PIT}_{(m)}$ are the ordered statistics of $\mathrm{PIT}_k$'s \citep{stephens1974edf}. A smaller value of the CvM statistic indicates a better calibration. 

Calibration at specific quantile levels can be summarized by empirical non-exceedance rates and central prediction intervals. Let $\hat{Q}_\tau (\mathbf{X}_k)$, ($\tau \in (0,1)$) be the estimated conditional $\tau$th quantile defined in (\ref{eq:quantile}). For $0<\tau_L<\tau_U<1$, define
\begin{align*}
\hat{p}_\tau = \frac{1}{m} \sum_{k=1}^m I[Y_k \leq \hat{Q}_\tau (\mathbf{X}_k)], \quad \widehat{\text{Coverage}}_{\tau_L, \tau_U} = \frac{1}{m} \sum_{k=1}^m I[\hat{Q}_{\tau_L} (\mathbf{X}_k) \leq Y_k \leq \hat{Q}_{\tau_U} (\mathbf{X}_k)].
\end{align*}
A well-calibrated model should have $\hat{p}_\tau$ close to $\tau$ and $\widehat{\text{Coverage}}_{\tau_L, \tau_U}$ close to $(\tau_U- \tau_L)$.

For a threshold $c$ of the outcome, we can evaluate the accuracy of the predicted probability $P(Y \leq c \mid \mathbf{X}= \vecx)$ using the Brier score \citep{brier1950}.
Specifically, 
\begin{align*}
\text{Brier}(c)= \frac{1}{m} \sum_{k=1}^{m} \left[ \hat{F}_k(c)- I(Y_k \leq c) \right]^2,
\end{align*}
where $\left[ \hat{F}_k(c)- I(Y_k \leq c) \right]^2$ is the threshold-specific per-observation Brier score loss.
To evaluate the full predictive distribution, we consider the continuous ranked probability score (CRPS; \citealp{crps1976}). For observation $k$, the CRPS is the threshold-specific Brier score loss integrated over all thresholds:
\begin{align*}
\text{CRPS} (\hat{F}_k, Y_k)= \int_{-\infty}^\infty \left[ \hat{F}_k(c)- I(Y_k \leq c) \right]^2 \mathrm{d}c. 
\end{align*}
Overall predictive performance can be summarized by averaging CRPS over the evaluation sample, with smaller values indicating better distributional prediction. 

We define permutation-based variable importance using OOB CRPS, so that a covariate's importance reflects its contribution to prediction of the full conditional distribution. For tree $T_b$, let $O_b= \{i: i \notin \mathcal{I}_b\}$ denote its OOB observations. Let $\hat F_{i|b}$ be the predicted CDF for $i \in O_b$. 
The baseline OOB loss for tree $T_b$ is
$$\overline{\mathrm{CRPS}}_b = \frac{1}{|O_b|} \sum_{i\in O_b} \mathrm{CRPS}(\hat F_{i|b},Y_i).$$
To evaluate the importance of covariate $X_j$, we randomly permute its values among observations in $O_b$, pass the permuted covariate vectors down the same tree $T_b$, and let $\hat F^{\pi}_{i|b,j}$ denote the resulting predicted CDF. The corresponding permuted OOB loss is
$$ \overline{\mathrm{CRPS}}^{\pi}_{b,j} = \frac{1}{|O_b|} \sum_{i\in O_b} \mathrm{CRPS}(\hat F^{\pi}_{i|b,j},Y_i).$$
If the permutation is repeated, $\overline{\mathrm{CRPS}}^{\pi}_{b,j}$ denotes the average over repetitions. We define the CRPS-based variable importance for covariate $X_j$ as the percentage increase in OOB CRPS:
\begin{equation}
\label{eq:vimp-crps-percent}
\widehat{\mathrm{VI}}^{\%}_j = 100\% \times \frac{ \sum_{b=1}^{B} \left( \overline{\mathrm{CRPS}}^{\pi}_{b,j} - \overline{\mathrm{CRPS}}_b \right) }{ \sum_{b=1}^{B} \overline{\mathrm{CRPS}}_b}.
\end{equation}
A positive $\widehat{\mathrm{VI}}^{\%}_j$ indicates that permuting $X_j$ worsens OOB distributional prediction, suggesting its relevance to the conditional distribution.

\section{Simulation Studies}
We conducted simulation studies to evaluate the performance of the Wilcoxon regression tree and WRF against existing methods that rely on SSE- or pinball-loss-based splitting rules.

\subsection{Simulation Setup}
We consider six independent random variables: $X_1 \sim N(0,1)$, $X_2 \sim \text{Bernoulli} (p=1/2)$, $X_3 \sim t(\text{df}=10)$, $X_4 \sim \text{Uniform} (0,1)$, $X_5 \sim \text{Beta} (\alpha=3, \beta=2)$, $X_6 \sim \text{Poisson} (\lambda=1)$. Here, $X_1, \ldots, X_4$ affect the outcome, and $X_5$ and $X_6$ act as noise.

\textbf{Scenario 1: Tree-Based Data Generating Process (DGP).}
Here, we used a tree-structured, piecewise constant DGP with seven terminal nodes (visualized in Supplementary Figure~S1). 
Let $c_{1,1}= \Phi^{-1} (1/6)$, $c_{1,2}= \Phi^{-1} (7/12)$, $c_3=F_{t, \text{df}=10}^{-1} (1/5)$ and $c_4=1/7$, where $\Phi(\cdot)$ is the standard normal CDF and $F_{t, \mathrm{df}=10}$ is the CDF of a Student's $t$ distribution with 10 degrees of freedom. A latent outcome $Y^*= \mu_1(\mathbf{X})+ \epsilon_1$, $\epsilon_1 \sim N(0,1)$ independent of $\mathbf{X}$, was generated with
$$\mu_1(\mathbf{X})= \begin{cases}
-1.5, & X_4 \leq c_4,\\
-1, & X_4 > c_4,\; X_1 \le c_{1,1},\\
-0.5, & X_4 > c_4,\; X_1 > c_{1,1},\; X_3 \le c_3 ,\\
0, & X_4 > c_4,\; X_1 > c_{1,1},\; X_3 > c_3,\; X_1 \le c_{1,2},\; X_2 = 1,\\
0.5, & X_4 > c_4,\; X_1 > c_{1,1},\; X_3 > c_3,\; X_1 \le c_{1,2},\; X_2 = 0,\\
1, & X_4 > c_4,\; X_1 > c_{1,1},\; X_3 > c_3,\; X_1 > c_{1,2},\; X_2 = 1,\\
1.5, & X_4 > c_4,\; X_1 > c_{1,1},\; X_3 > c_3,\; X_1 > c_{1,2},\; X_2 = 0. \end{cases}$$
The resulting partition has seven terminal regions, each with probability $1/7$. Two observed outcomes were derived from $Y^*$:
\begin{itemize}
\item \textit{Scenario 1.1 (Untransformed)}: $Y_{1.1}=Y^*$, and thus $Y_{1.1}\mid\mathbf{X} \sim N(\mu_1( \mathbf{X}), 1)$, with symmetric, homoscedastic errors. This setting was designed to favor SSE-based splitting. 

\item \textit{Scenario 1.2 (Transformed)}: $Y_{1.2}= h(Y^*)$, where $h(y)= F_{\chi^2_{15}}^{-1} [\Phi(y)]$, with $F_{\chi^2_{15}}^{-1}(\cdot)$ being the inverse CDF of the chi-squared distribution with 15 degrees of freedom. This transformation is  strictly increasing and produces a right-skewed marginal distribution with heteroscedastic, non-normal errors, while preserving the same outcome ranks as Scenario 1.1 \citep{tianetal2023}. Thus, the Wilcoxon regression tree and WRF should produce identical splits on  $Y_{1.1}$ and $Y_{1.2}$, while SSE-based and pinball-loss-based splitting may change after transformation. 
The true conditional mean was numerically approximated via $10^7$ Monte Carlo draws. 
\end{itemize}

\textbf{Scenario 2: Continuous Non-Linear DGP.}
To assess performance when the true regression function is smooth rather than tree-structured, we generated $Y_2=\mu_2(\mathbf{X})+ \sigma(\mathbf{X}) \cdot \epsilon_2$, where
$$\begin{cases} \mu_2(\mathbf{X})= 0.5+ X_1+ 0.8 X_2+ 0.3 X_3+ 0.7 \sin(2\pi X_4)- 0.5 X_1 X_2\\ \sigma(\mathbf{X})= 0.4+ 0.3 \cdot I(X_4>0.5)+ 0.5\cdot I(X_1 X_3>0) \end{cases},$$
and $\epsilon_2 = Z-3$, where $Z\sim \chi_3^2$ was independent of all covariates. Since $E(\epsilon_2)=0$ and $\Var(\epsilon_2)=6$, the conditional distributions are right-skewed and heteroscedastic with $\Var(Y_2\mid\mathbf{X})= 6\sigma(\mathbf{X})^2$. The true conditional distribution is $F_{Y_2\mid\mathbf{X}} (y)= F_{\chi_3^2} \Big[ \frac{y- \mu_2(\mathbf{X})}{ \sigma(\mathbf{X})} +3 \Big]$ and the conditional $\tau$th quantile is $Q_{Y_2\mid\mathbf{X}} (\tau\mid\mathbf{X})= \mu_2(\mathbf{X})+ \sigma(\mathbf{X}) \big[ F_{\chi_3^2}^{-1} (\tau)-3 \big]$.

For single-tree comparisons (Scenarios 1.1--1.2), we considered: (1) CART regression tree with SSE splitting; (2) Wilcoxon regression tree; (3) pinball-loss tree with the loss being the sum of pinball losses at $\tau= 0.1, 0.5, 0.9$; (4) CPM regression tree with probit link (correctly specified); and (5) CPM regression tree with logistic link (misspecified). For forest comparisons (all scenarios), we compared the WRF with the standard QRF. The pinball-loss forest was excluded because the pinball-loss trees were found to be substantially inferior to all other single-tree methods (Supplementary Tables~S1--S6) and required 4--8 times more computation time than the Wilcoxon tree.

For each simulation replicate, we drew training samples of size $n_{\text{train}} \in \{1000, 2000, 5000\}$. We used a fixed set of $n_{\text{test}}=1000$ covariate vectors to evaluate the results; given the covariate vectors, test outcomes were generated independently for each simulation replicate. We ran 1000 replicates for single-tree comparisons, and 250 for forest comparisons.
When generating single trees, we used all 6 predictors per split with minimum node size $n_{\text{node}}=60, 120, 300$, corresponding to $n_{\text{train}}=1000, 2000, 5000$, respectively. Forests were built on 1000 trees with 0.632 subsampling, $\textit{mtry}=2$, and $n_{\text{node}}=10, 20, 50$ respectively for Scenario 1, and $5, 10, 25$ respectively for Scenario 2.
For each test point and in all Scenarios, we estimated the conditional 0.1, 0.5, 0.9 quantiles, and the conditional mean. In Scenario 1, we also estimated the threshold probability $F(c \mid \mathbf{x})$ for $c=0$ (Scenario 1.1) or $c= F_{\chi_{15}^2}^{-1} [\Phi (0)] \approx 14.339$ (Scenario 1.2); these thresholds correspond to the same latent-scale value. We report RMSE and bias for these target quantities. We also report  mean CRPS and CvM statistic (as defined in Section \ref{sec:3-WRF}) for distributional prediction. Additional metrics (pinball losses, Brier scores, coverage rates, non-exceedance rates, median interval widths) are in the Supplementary~S2.2--S2.3. 
All simulations were implemented in R. Software versions and computing details for the reported execution times are provided in Supplementary Section~S2.1.

\subsection{Simulation Results}

\subsubsection{Single-Tree Comparisons}
\label{sec:4_3_1-tree}
Full single-tree results are reported in the Supplementary~S2.2. Here we summarize the key findings.
In Scenario 1.1, CART and the Wilcoxon regression tree had comparable performance across all metrics, with differences becoming negligible as $n_{\text{train}}$ increased. The CPM regression tree with correctly specified probit link achieved the best performance across all metrics and training sizes, because it combines the Wilcoxon tree's partitioning with semiparametric distributional smoothing across leaves. Even with the misspecified logistic link, the CPM tree performed well, although its calibration as measured by the CvM statistic was worse than the other methods. The pinball-loss tree was substantially worse than all other methods across all training sizes and almost all metrics, including the pinball loss it directly targeted.

In Scenario 1.2, the Wilcoxon regression tree improved over CART across all metrics: quantile RMSEs were reduced by 5--16\% depending on the quantile level and training size, and the threshold probability and mean RMSEs were also lower (Supplementary Tables~S4--S6). These improvements occurred because SSE splitting is sensitive to extreme values. The CPM regression tree with the probit link again achieved the best overall single-tree performance. In both scenarios, the Wilcoxon regression tree and CART took comparable computation time, while the pinball-loss tree was 4--8 times slower than the Wilcoxon tree; the CPM regression trees were much slower than all other trees.

\subsubsection{Forest Comparisons: WRF vs. QRF}
\label{sec:4_3_2-forest}
Tables~\ref{tab:forest_s12} and \ref{tab:forest_s2} display the results for the WRF and QRF under Scenarios 1.2 and 2. Full results, including biases, pinball losses, Brier scores, and calibration diagnostics are in Supplementary Section~S2.3. 

For Scenario 1.1 (untransformed outcomes), the WRF and QRF produced comparable results across all metrics and training sizes, with relative differences within 2\% (Supplementary Tables~S7--S9). These results showed that the WRF had little efficiency loss even though Scenario 1.1 was designed to favor the QRF. The WRF took approximately 35--60\% more computation time.

For Scenario 1.2 (transformed outcomes), the WRF outperformed the QRF across all metrics for all training sizes (Table~\ref{tab:forest_s12}). The WRF had reduced quantile RMSEs by 3.0--5.7\% at $\tau=0.1$, by 0.8--3.2\% at $\tau=0.5$, and by 7.9--15.7\% at $\tau=0.9$. The WRF also had reduced RMSEs for the conditional mean (3.7--8.4\%) and for the threshold probability (4.5--11.3\%). The CvM statistic was consistently lower for the WRF, indicating better calibration. The mean CRPS was also slightly lower for the WRF at each training size, with relative reductions of 0.2--0.3\%.

For Scenario 2 (continuous non-linear DGP), the WRF had consistent improvement in calibration: the CvM statistic was 9--10\% lower across all training sizes (Table~\ref{tab:forest_s2}). The WRF had lower quantile RMSEs at $\tau=0.1$ and $\tau=0.5$ (2--6\% reduction); the two methods had a similar quantile RMSE at $\tau=0.9$. The conditional mean RMSE was slightly lower for the WRF, and the mean CRPS was similar between the two methods.

\section{Application}
\label{sec:5-app}
\subsection{Clinical Motivation and Study Population}
We apply the WRF to predict the distribution of two biomarkers measured approximately six months after antiretroviral therapy (ART) initiation: CD4 cell count (cells/$\mu$L) and plasma HIV-1 RNA viral load (VL, copies/mL). CD4 cell count is a marker of immune status, with higher values indicating a healthier immune system.
VL quantifies ongoing viral replication, and the primary treatment goal is to achieve suppression below the assay detection limit \citep{NIH_VLMonitoring}. For CD4, we study the full conditional distribution and the threshold probabilities $P(\text{CD4} \le c \mid \mathbf{X})$ for $c =200, 350$ and $500$, corresponding to traditional thresholds denoting advanced, partial and minimal immunodeficiency, respectively. \citep{WHO_AHD} For VL, we focus on the probability of viral suppression, $P(\text{VL} < 80 \mid \mathbf{X})$.

The data come from a multicenter cohort study conducted in six Latin American countries from 2013 to 2023, enrolling $N=11,070$ ART-naive adults living with HIV. For each subject, we use the first CD4 and VL measurements within a $6 \pm 3$ month window after ART initiation. Ten covariates used in modeling are age at ART initiation, sex, site (Argentina, Brazil, Chile, Honduras, Mexico, Peru), mode of HIV infection (heterosexual or homosexual contact, other, unknown), prior AIDS-defining event, baseline CD4, baseline VL, ART regimen (INSTI-, NNRTI- or PI-based, other), months from ART initiation to measurement, and calendar year. Descriptive statistics for covariates are in the Supplementary Table~S16. 

The 6-month CD4 distribution is right-skewed. VL presents a different challenge: values below the assay-specific detection limit (DL) are reported only as ``$<$DL,'' with DLs varying across sites and years ($<$20, $<$40, $<$50, and $<$80 copies/mL). We harmonized to a common DL of 80 copies/mL and assigned all values below 80 a common tied rank; 83.1\% of subjects had VL below this threshold.

\subsection{6-Month CD4 Cell Count}
\label{sec:5_2}
We fit the WRF and, for comparison, the standard QRF using 0.632 subsampling, $n_{\text{tree}}=1000$ trees, and $\textit{mtry}=3$. The minimum node size was tuned over $\{5, 10, 20, 50, 100, 200\}$ using 10-fold cross-validation (CV): node size 5 gave the best performance for both methods across all metrics (Supplementary Section~S3.2). We evaluated predictive performance using the same 10-fold CV.

Table~\ref{tab:cd4_cv} reports the 10-fold CV performance. The WRF achieved lower Brier scores for $P(\text{CD4}\le 200 \mid \mathbf{X})$ and $P(\text{CD4}\le 350 \mid \mathbf{X})$, and a lower CvM statistic, indicating better calibration. 
The mean CRPS was slightly higher for the WRF than for the QRF, whereas the median CRPS was lower. This discrepancy suggested that the mean CRPS was influenced by a small number of observations with large CRPS values, while the WRF performed better for the typical subject as summarized by the median.
These results are consistent with the simulation findings in Section~\ref{sec:4_3_2-forest}. When the outcome distribution is skewed, the WRF improved probability estimation at clinically relevant thresholds and modestly improved calibration.
Additional diagnostic and illustrative results are provided in Supplementary Section~S3.3.

Table~\ref{tab:vimp_cd4} reports variable importance for predicting 6-month CD4, measured by the percentage increase in OOB mean CRPS defined in (\ref{eq:vimp-crps-percent}) after randomly permuting each covariate (averaged over 10 repetitions). Baseline CD4 was by far the most important predictor for both forests, followed by baseline VL and age. The rankings of variable importance were consistent for both methods.

Consider two reference profiles with all baseline covariates fixed at their cohort medians (continuous) or modes (categorical) except for baseline CD4 and VL, which are set to \{200 cells/$\mu$L, $10^6$ copies/mL\} and \{500 cells/$\mu$L, $10^4$ copies/mL\}, corresponding to starting ART with advanced versus less advanced baseline disease, respectively. Figure~\ref{fig:cd4vl_cdf}(A) displays the WRF-estimated conditional CDFs of 6-month CD4 for these two profiles. Let $\vecx_A$ and $\vecx_B$ denote the advanced and less advanced disease profiles, respectively. The distribution for the advanced disease profile was shifted towards lower 6-month CD4 counts, with estimated median 425 cells/$\mu$L and $\hat P(\text{CD4} \le 500 \mid \mathbf{X}= \vecx_A) = 0.67$, compared with 643 cells/$\mu$L for estimated median and $\hat P(\text{CD4} \le 500 \mid \mathbf{X}= \vecx_B) = 0.13$ for the less advanced profile.

\subsection{6-Month HIV Viral Load}
The vast majority, 83.1\%, of participants had 6-month VL below 80 copies/mL. Given the high left-censoring rate, the standard QRF cannot be applied. 
The WRF accommodates the detection limit by treating VL as a mixed continuous-ordinal outcome and assigning all values below the detection limit to a common lowest ranked level. 
Using the same forest settings as for CD4 ($n_{\text{tree}}=1000$, $\textit{mtry}=3$, 0.632 subsampling), we tuned the minimum node size over $\{20, 50, 100, 200, 500, 1000\}$: larger candidate values are used here because of the heavy ties. Node size 20 gave the best CV Brier score (Supplementary Section~S3.2).

We compared the WRF with two binary-outcome competitors for estimating $P(\text{VL}<80 \mid \mathbf{X})$: logistic regression (with restricted cubic splines for continuous covariates) and a classification random forest trained on $I(\text{VL}<80)$ using the same forest configuration and random seeds. The WRF achieved a 10-fold CV Brier score of 0.116, representing a 17.1\% improvement over an uninformative model ($\hat{p}(1-\hat{p}) = 0.140$) and an 9.3\% improvement over logistic regression (Brier $= 0.127$). The classification forest, trained directly on the binary endpoint, achieved a lower Brier score of 0.085. This is expected as the classification forest directly targets the binary estimand, whereas the WRF models the full ordered outcome. The WRF thus remains attractive when the analysis aims to estimate other quantities that require retaining the ordering among uncensored VL values in a unified framework (e.g., Figure~\ref{fig:cd4vl_cdf}), while the classification forest is preferable when interest is limited to a single threshold.

Table~\ref{tab:vimp_vl} reports variable importance for viral suppression, measured by the percentage increase in OOB Brier score for $P(\text{VL}<80 \mid \mathbf{X})$ after randomly permuting each covariate. Baseline VL was the most important predictor, followed by months to measurement, baseline CD4, and age. These findings are clinically plausible because subjects with higher pretreatment VL require longer to achieve suppression, and the probability of suppression increases with time on ART during the early treatment period.

Figure~\ref{fig:cd4vl_cdf}(B) shows the WRF-estimated conditional CDFs for the 6-month VL outcome for the same two reference profiles defined at the end of Section~\ref{sec:5_2}. Because measurements below the detection limit, 80 copies/mL, were represented by a common lowest ranked level, the jump at $<80$ corresponds to the estimated $\hat P(\mathrm{VL}<80\mid \mathbf{X})$. This probability was lower for the advanced-disease profile than for the less advanced profile (0.73 versus 0.93). The estimated conditional 0.95 quantile was 1130 copies/mL for the advanced-disease profile, compared with 157 copies/mL for the less advanced profile. Thus the WRF summarizes the probability of having undetectable VL and the distribution of detectable VL with a single ordered-outcome analysis.

\section{Discussion}
In this paper, we proposed a Wilcoxon regression tree and a Wilcoxon random forest (WRF) for conditional distributional prediction. The key idea is to use a rank-based Wilcoxon splitting criterion instead of the traditional, scale-dependent, mean-oriented SSE criterion. 
As a result, the proposed tree and forest are invariant to strictly increasing transformations of the outcome and can be applied to continuous, ordinal, or mixed outcomes, including outcomes subject to detection limits. 
The WRF combines rank-based partitioning with ECDF estimates within nodes across multiple trees to estimate the conditional distribution. The resulting WRF estimator is consistent for the conditional distribution, and it allows estimation of conditional quantiles, threshold probabilities, and means.

Our simulations show that the WRF performs comparably to the standard QRF when the outcome distribution is approximately symmetric and homoscedastic, and has improved quantile estimation and better calibration when outcomes are skewed or heteroscedastic. 
In the HIV application, the WRF improved over the QRF for prediction of 6-month CD4 threshold probabilities and distribution calibration. For viral load, the WRF accommodated the large mass at the detection limit and provided a unified ordered-outcome analysis, although a direct binary classification forest outperformed it when the estimand of interest was the probability of viral suppression.

Our work can be extended in the future. First, the WRF can be extended to more general forms of censored outcomes, including settings with multiple detection limits and right- or interval-censored survival times. 
Second, it would be desirable to develop inference procedures for WRF predictions, such as confidence intervals for conditional quantiles or threshold probabilities. An honest version of the WRF, which uses separate data for split selection and distribution estimation \citep{wagerathey2018}, may make such inference easier to justify and could also support causal extensions, such as estimating treatment-specific outcome distributions and heterogeneous distributional treatment effects. 

Overall, the WRF provides a simple and robust version of distributional random forests. It performs well when the outcome is skewed, heavy-tailed, ordinal, or affected by detection limits. We expect rank-based forest methods to be useful in a broad range of applications where scientifically meaningful prediction depends on the shape of the conditional outcome distribution, not just on the mean.

\section*{Data Availability Statement}
The data used in this manuscript are available upon request subject to cohort network approval. In accordance with the Caribbean, Central and South America network for HIV epidemiology (CCASAnet) principles of collaboration, applicants seeking the data need to submit concept sheets outlining the intended use of the data. All data requests will be reviewed by appropriate committee within the CCASAnet. After approval, transfer of data will likely require data use agreements. We have created a synthetic analysis cohort dataset with similar structure to the original data that can be readily downloaded with analysis codes at: \url{https://github.com/dshidanni/WilcoxonRandomForest}

\section*{Acknowledgments}
This work was supported by the National Institutes of Health (NIH) grant R01AI093234 (B.E.S. and C.L.) and the NIH-funded CCASAnet (U01AI069923; D.S. and B.E.S.). 
The authors thank the participants, clinical teams, investigators, data managers, and coordinating staff of CCASAnet, whose data made the HIV application possible. 
The content is solely the responsibility of the authors and does not necessarily represent the official views of the NIH, CCASAnet, or the participating institutions.

\bibliographystyle{chicago}
\bibliography{reference}

@article{tianetal2023,
	author = {Tian, Yuqi and Shepherd, Bryan E. and Li, Chun and Zeng, Donglin and Schildcrout, Jonathan S.},
	journal = {Biometrics},
	number = {79},
	pages = {3764-3777},
	title = {Analyzing clustered continuous response variables with ordinal regression models},
	year = {2023}}

@article{scornetetal2015,
	author = {Scornet, Erwan and Biau, Gerard and Vert, Jean-Philippe},
	journal = {The Annals of Statistics},
	number = {4},
	pages = {1716-1741},
	title = {Consistency of Random Forests},
	volume = {43},
	year = {2015}}

@article{meinshausen2006,
	author = {Meinshausen, Nicolai},
	journal = {Journal of Machine Learning Research},
	volume = {7},
	pages = {983-999},
	title = {Quantile Regression Forests},
	year = {2006}}

@article{hyndmanfan1996,
    author = {Hyndman, Rob J. and Fan, Yanan},
    title = {Sample Quantiles in Statistical Packages},
    journal = {The American Statistician},
    year = {1996},
    volume = {50},
    number = {4},
    pages = {361-365}}

@book{CARTbook,
    author = {Breiman, Leo and Friedman, Jerome and Olshen, R A. and Stone, Charles J.},
    title = {Classification and Regression Trees},
    publisher = {Chapman and Hall/CRC},
    year = 1984
}

@inproceedings{bhat2015,
  title={Towards Scalable Quantile Regression Trees},
  author={Bhat, Harish S and Kumar, Nitesh and Vaz, Garnet J},
  booktitle={2015 IEEE International Conference on Big Data (Big Data)},
  pages={53--60},
  year={2015},
  organization={IEEE},
  doi={10.1109/BigData.2015.7363741}
}

@article{koenker1978, title={Regression Quantiles},
  author={Koenker, Roger and Bassett, Gilbert},
  journal={Econometrica},
  volume={46},
  number={1},
  pages={33--50},
  year={1978},
  publisher={JSTOR},
  doi={10.2307/1913643}
}

@article{linjeon2006,
  title={Random Forests and Adaptive Nearest Neighbors},
  author={Lin, Yi and Jeon, Yongho},
  journal={Journal of the American Statistical Association},
  volume={101},
  number={474},
  pages={578--590},
  year={2006},
  publisher={Taylor \& Francis},
  doi={10.1198/016214505000001230}
}

@article{mccullagh1980,
  title={Regression Models for Ordinal Data},
  author={McCullagh, Peter},
  journal={Journal of the Royal Statistical Society, Series B},
  volume={42},
  number={2},
  pages={109--127},
  year={1980},
  publisher={JSTOR}
}

@article{liu2017,
  title={Modeling Continuous Response Variables Using Ordinal Regression},
  author={Liu, Qi and Shepherd, Bryan E. and Li, Chun and Harrell, Jr., Frank E.},
  journal={Statistics in Medicine},
  volume={36},
  number={27},
  pages={4316--4335},
  year={2017},
  publisher={Wiley Online Library},
  doi={10.1002/sim.7433}
}

@article{tian2024,
  title={Addressing Multiple Detection Limits with Semiparametric Cumulative Probability Models},
  author={Tian, Yuqi and Li, Chun and Tu, Shengxin and James, Nathan T. and Harrell, Frank E. and Shepherd, Bryan E.},
  journal={Journal of the American Statistical Association},
  volume={119},
  number={546},
  pages={864--874},
  year={2024},
  publisher={Taylor \& Francis},
  doi={10.1080/01621459.2024.2315667}
}

@article{breiman2001,
  title={Random Forests},
  author={Breiman, Leo},
  journal={Machine Learning},
  volume={45},
  number={1},
  pages={5--32},
  year={2001},
  publisher={Springer},
  doi={10.1023/A:1010933404324}
}

@article{harrell1982new,
  title={A New Distribution-Free Quantile Estimator},
  author={Harrell Jr, Frank E. and Davis, C. E.},
  journal={Biometrika},
  volume={69},
  number={3},
  pages={635--640},
  year={1982},
  publisher={Oxford University Press},
  doi={10.1093/biomet/69.3.635}
}

@article{yang1985smooth,
  title={A Smooth Nonparametric Estimator of a Quantile Function},
  author={Yang, Shie-Shien},
  journal={Journal of the American Statistical Association},
  volume={80},
  number={392},
  pages={1004--1011},
  year={1985},
  publisher={Taylor \& Francis},
  doi={10.1080/01621459.1985.10478217}
}

@article{sheather1990kernel,
  title={Kernel Quantile Estimators},
  author={Sheather, Simon J. and Marron, J. S.},
  journal={Journal of the American Statistical Association},
  volume={85},
  number={410},
  pages={410--416},
  year={1990},
  publisher={Taylor \& Francis},
  doi={10.1080/01621459.1990.10476214}
}

@manual{rms,
    title = {rms: Regression Modeling Strategies},
    author = {Frank E {Harrell Jr}},
    year = {2025},
    note = {R package version 8.0-0},
    url = {https://CRAN.R-project.org/package=rms},
    doi = {10.32614/CRAN.package.rms},
  }

@article{brier1950,
  author    = {Brier, Glenn W.},
  title     = {Verification of Forecasts Expressed in Terms of Probability},
  journal   = {Monthly Weather Review},
  volume    = {78},
  number    = {1},
  pages     = {1--3},
  year      = {1950},
  doi       = {10.1175/1520-0493(1950)078<0001:VOFEIT>2.0.CO;2}
}

@article{crps1976,
  author    = {Matheson, James E. and Winkler, Robert L.},
  title     = {Scoring Rules for Continuous Probability Distributions},
  journal   = {Management Science},
  volume    = {22},
  number    = {10},
  pages     = {1087--1096},
  year      = {1976},
  doi       = {10.1287/mnsc.22.10.1087}
}

@article{gneiting2007,
  author    = {Gneiting, Tilmann and Balabdaoui, Fadoua and Raftery, Adrian E.},
  title     = {Probabilistic Forecasts, Calibration and Sharpness},
  journal   = {Journal of the Royal Statistical Society. Series B, Statistical Methodology},
  volume    = {69},
  number    = {2},
  pages     = {243--268},
  year      = {2007},
  doi       = {10.1111/j.1467-9868.2007.00587.x}
}

@article{diebold1998evaluating,
  author    = {Diebold, Francis X. and Gunther, Todd A. and Tay, Anthony S.},
  title     = {Evaluating Density Forecasts with Applications to Financial Risk Management},
  journal   = {International Economic Review},
  volume    = {39},
  number    = {4},
  pages     = {863--883},
  year      = {1998},
  doi       = {10.2307/2527342}
}

@article{stephens1974edf,
  author    = {Stephens, M. A.},
  title     = {{EDF} Statistics for Goodness of Fit and Some Comparisons},
  journal   = {Journal of the American Statistical Association},
  volume    = {69},
  number    = {347},
  pages     = {730--737},
  year      = {1974},
  doi       = {10.1080/01621459.1974.10480196}
}

@Manual{r2026,
  title        = {R: A Language and Environment for Statistical Computing},
  author       = {{R Core Team}},
  organization = {R Foundation for Statistical Computing},
  address      = {Vienna, Austria},
  year         = {2026},
  url          = {https://www.R-project.org/}
}

@Manual{Matrix,
    title = {Matrix: Sparse and Dense Matrix Classes and Methods},
    author = {Douglas Bates and Martin Maechler and Mikael Jagan},
    year = {2025},
    note = {R package version 1.7-4},
    url = {https://CRAN.R-project.org/package=Matrix},
    doi = {10.32614/CRAN.package.Matrix},
  }

@misc{WHO_AHD,
  author       = {{World Health Organization}},
  title        = {Guidelines for Managing Advanced {HIV} Disease and Rapid Initiation of Antiretroviral Therapy},
  year         = {2017},
  month        = jul,
  url          = {https://www.who.int/publications/i/item/9789241550062},
  note         = {Guideline. Accessed 2026-04-21}
}

@misc{NIH_VLMonitoring,
  author       = {{Panel on Antiretroviral Guidelines for Adults and Adolescents}},
  title        = {Plasma {HIV}-1 {RNA} (Viral Load) and {CD4} Count Monitoring},
  year         = {2025},
  month        = sep,
  organization = {{ClinicalInfo, U.S. Department of Health and Human Services}},
  url          = {https://clinicalinfo.hiv.gov/en/guidelines/hiv-clinical-guidelines-adult-and-adolescent-arv/plasma-hiv-1-rna-cd4-monitoring},
  note         = {Updated September 25, 2025. Accessed 2026-04-21}
}

@article{chaudhuri2002nonparametric,
  title={Nonparametric Estimation of Conditional Quantiles Using Quantile Regression Trees},
  author={Chaudhuri, Probal and Loh, Wei-Yin},
  journal={Bernoulli},
  volume={8},
  number={5},
  pages={561--576},
  year={2002},
  publisher={Bernoulli Society for Mathematical Statistics and Probability}
}

@misc{wagerwalther2015,
  title={Adaptive Concentration of Regression Trees, with Application to Random Forests},
  author={Wager, Stefan and Walther, Guenther},
  year={2015},
  eprint={1503.06388},
  archivePrefix={arXiv},
  primaryClass={math.ST},
  doi={10.48550/arXiv.1503.06388}
}

@article{wagerathey2018,
  title={Estimation and Inference of Heterogeneous Treatment Effects using Random Forests},
  author={Wager, Stefan and Athey, Susan},
  journal={Journal of the American Statistical Association},
  volume={113},
  number={523},
  pages={1228--1242},
  year={2018},
  doi={10.1080/01621459.2017.1319839}
}

@article{stone1977,
  author  = {Stone, Charles J.},
  title   = {Consistent Nonparametric Regression},
  journal = {The Annals of Statistics},
  year    = {1977},
  volume  = {5},
  doi     = {10.1214/aos/1176343886}
}

@article{BuhlmannYu2002,
  author  = {Bühlmann, Peter and Yu, Bin},
  title   = {Analyzing bagging},
  journal = {The Annals of Statistics},
  year    = {2002},
  volume  = {30},
  doi     = {10.1214/aos/1031689014}
}

\clearpage
\section*{Tables}
\begin{table}[!ht]
\centering
\small
\caption{Forest comparisons under Scenario 1.2 ($Y_{1.2} = h(Y^*)$). Columns: RMSE for conditional quantile ($\tau = 0.1, 0.5, 0.9$), probability ($c \approx 14.339$), and mean estimates; mean CRPS; CvM statistic; computation time (s) for 1000 trees; averaged over 250 replicates with $n_{\text{test}} = 1000$. Bold indicates the best value in each column within each $n_{\text{train}}$.}
\label{tab:forest_s12}
\begin{tabular}{l ccc cc cc r}
\toprule
 & \multicolumn{3}{c}{Quantile RMSE} & Prob.\ & Mean & & & \\
\cmidrule(lr){2-4}
Method & $\tau{=}0.1$ & $\tau{=}0.5$ & $\tau{=}0.9$ & RMSE & RMSE & CRPS & CvM & Time (s) \\
\midrule
\multicolumn{9}{l}{\textit{$n_{\text{train}}=1000$, $n_{\text{node}}=10$}} \\[2pt]
QRF & 1.556 & 1.413 & 2.613 & 0.089 & 1.504 & 3.145 & 0.396 & 2.927 \\
WRF & \textbf{1.509} & \textbf{1.401} & \textbf{2.406} & \textbf{0.085} & \textbf{1.449} & \textbf{3.138} & \textbf{0.380} & 3.985 \\
\addlinespace[6pt]
\multicolumn{9}{l}{\textit{$n_{\text{train}}=2000$, $n_{\text{node}}=20$}} \\[2pt]
QRF & 1.339 & 1.111 & 2.178 & 0.073 & 1.247 & 3.107 & 0.363 & 3.427 \\
WRF & \textbf{1.282} & \textbf{1.091} & \textbf{1.940} & \textbf{0.068} & \textbf{1.178} & \textbf{3.100} & \textbf{0.343} & 5.025 \\
\addlinespace[6pt]
\multicolumn{9}{l}{\textit{$n_{\text{train}}=5000$, $n_{\text{node}}=50$}} \\[2pt]
QRF & 1.168 & 0.845 & 1.864 & 0.062 & 1.040 & 3.082 & 0.341 & 4.837 \\
WRF & \textbf{1.102} & \textbf{0.818} & \textbf{1.572} & \textbf{0.055} & \textbf{0.953} & \textbf{3.074} & \textbf{0.322} & 7.972 \\
\bottomrule
\end{tabular}
\end{table}

\begin{table}[!ht]
\centering
\small
\caption{Forest comparisons under Scenario 2 (continuous non-linear DGP, right-skewed heteroscedastic errors). Columns: RMSE for conditional quantile ($\tau = 0.1, 0.5, 0.9$) and mean estimates; mean CRPS; mean CvM; computation time (s); averaged over 250 replicates, $n_{\text{test}} = 1000$. Bold indicates the best value in each column within each $n_{\text{train}}$.}
\label{tab:forest_s2}
\begin{tabular}{l ccc c cc r}
\toprule
 & \multicolumn{3}{c}{Quantile RMSE} & Mean & & & \\
\cmidrule(lr){2-4}
Method & $\tau{=}0.1$ & $\tau{=}0.5$ & $\tau{=}0.9$ & RMSE & CRPS & CvM & Time (s) \\
\midrule
\multicolumn{8}{l}{\textit{$n_{\text{train}}=1000$, $n_{\text{node}}=5$}} \\[2pt]
QRF & 0.702 & 0.519 & 1.024 & 0.460 & 1.075 & 0.488 & 6.127 \\
WRF & \textbf{0.673} & \textbf{0.509} & \textbf{1.011} & \textbf{0.450} & \textbf{1.073} & \textbf{0.443} & 7.848 \\
\addlinespace[6pt]
\multicolumn{8}{l}{\textit{$n_{\text{train}}=2000$, $n_{\text{node}}=10$}} \\[2pt]
QRF & 0.649 & 0.453 & \textbf{0.859} & 0.386 & 1.062 & 0.467 & 6.668 \\
WRF & \textbf{0.617} & \textbf{0.437} & 0.860 & \textbf{0.375} & \textbf{1.060} & \textbf{0.419} & 8.625 \\
\addlinespace[6pt]
\multicolumn{8}{l}{\textit{$n_{\text{train}}=5000$, $n_{\text{node}}=25$}} \\[2pt]
QRF & 0.607 & 0.404 & \textbf{0.708} & 0.329 & 1.053 & 0.448 & 7.885 \\
WRF & \textbf{0.572} & \textbf{0.384} & 0.732 & \textbf{0.322} & \textbf{1.051} & \textbf{0.402} & 11.385 \\
\bottomrule
\end{tabular}
\end{table}

\begin{table}[!ht]
\centering
\small
\caption{10-fold CV performance for 6-month CD4 ($N = 11,070$, node size $= 5$, 1000 trees, 0.632 subsampling, $\text{mtry} = 3$). Bold = better value per column.}
\label{tab:cd4_cv}
\begin{tabular}{l cc c ccc}
\toprule
 & \multicolumn{2}{c}{CRPS} & & \multicolumn{3}{c}{Brier Score} \\
\cmidrule(lr){2-3} \cmidrule(lr){5-7}
Method & Mean & Median & CvM & $c=200$ & $c=350$ & $c=500$ \\
\midrule
QRF & \textbf{69.05} & 48.20 & 4.730 & 0.067 & 0.081 & 0.091 \\
WRF & 69.52 & \textbf{47.77} & \textbf{4.685} & \textbf{0.065} & \textbf{0.080} & 0.091 \\
\bottomrule
\end{tabular}
\end{table}

\begin{table}[!ht]
\centering
\small
\caption{Variable importance for 6-month CD4: percentage increase in OOB mean CRPS upon random permutation of each covariate, averaged over 10 repetitions.}
\label{tab:vimp_cd4}
\begin{tabular}{l rr}
\toprule
 & \multicolumn{2}{c}{\% Increase in CRPS} \\
\cmidrule(lr){2-3}
Covariate & QRF & WRF \\
\midrule
Baseline CD4       & 142.53 & 140.10 \\
Baseline VL        &  24.36 &  23.69 \\
Age                &  19.94 &  19.47 \\
Site               &  16.54 &  16.08 \\
Months to measurement   &  15.67 &  15.53 \\
Calendar year      &  12.21 &  11.77 \\
Prior AIDS event   &  11.24 &  11.07 \\
Infection mode     &   9.06 &   9.39 \\
ART regimen        &   6.67 &   6.45 \\
Sex                &   2.29 &   2.26 \\
\bottomrule
\end{tabular}
\end{table}

\begin{table}[!ht]
\centering
\small
\caption{Variable importance for viral suppression: percentage increase in OOB Brier score for $P(\text{VL} < 80 \mid \mathbf{X})$ upon random permutation of each covariate, averaged over 10 repetitions.}
\label{tab:vimp_vl}
\begin{tabular}{l r}
\toprule
Covariate & \% Increase in Brier score \\
\midrule
Baseline VL & 7.691 \\
Months to measurement\ & 4.560 \\
Baseline CD4 & 3.474 \\
Age & 3.308 \\
Site & 2.516 \\
Calendar year & 1.926 \\
ART regimen & 1.484 \\
Infection mode & 0.660 \\
Prior AIDS event & 0.291 \\
Sex & 0.067 \\
\bottomrule
\end{tabular}
\end{table}

\begin{figure}[!ht]
    \centering
    \includegraphics[width=1\linewidth]{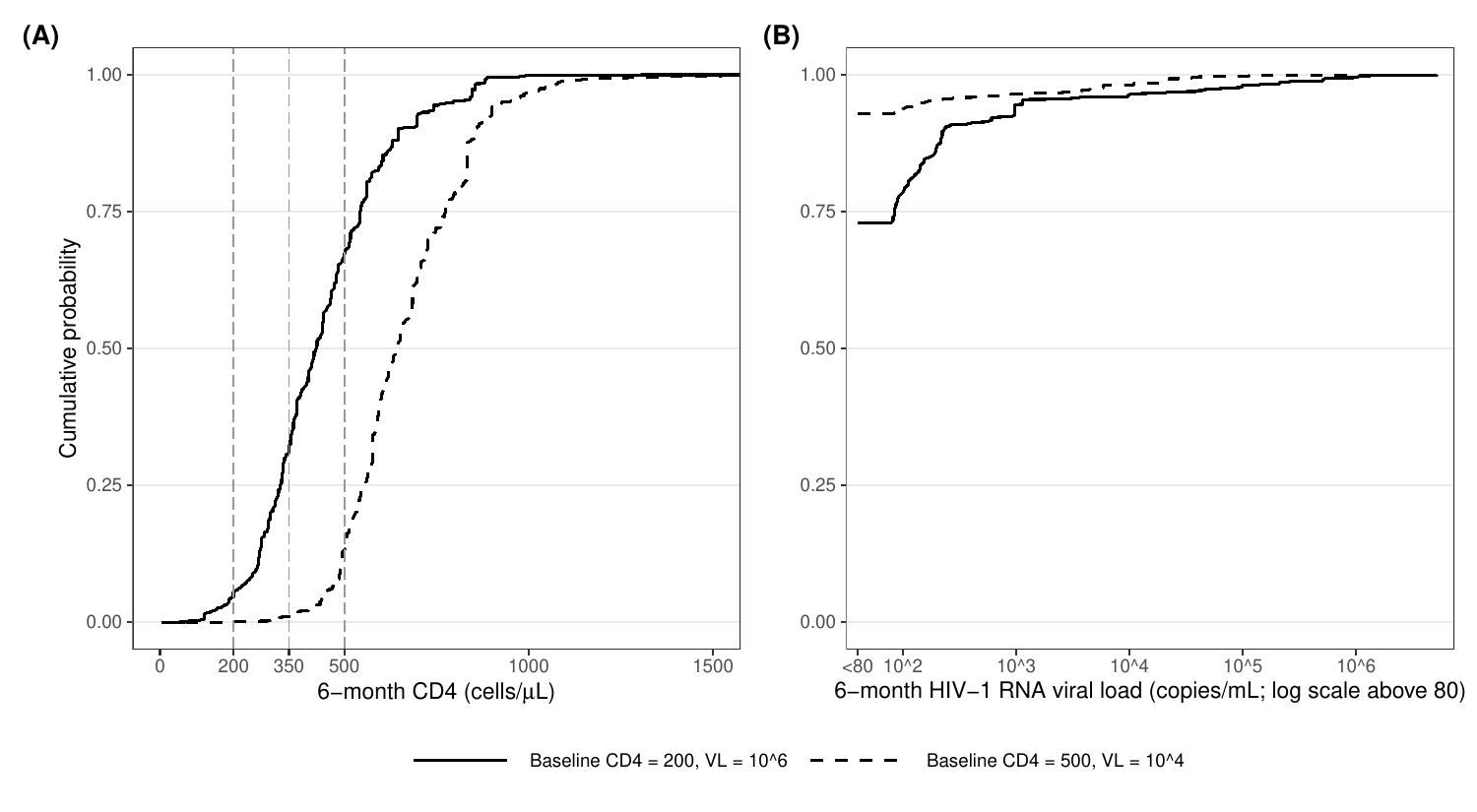}
    \caption{WRF-estimated conditional CDFs of (A) 6-month CD4 cell count and (B) 6-month HIV-1 RNA viral load for two hypothetical profiles that fix all covariates at their population medians (continuous) or modes (categorical) and differ only in baseline CD4 and VL. Forests were fit to the full cohort ($N=11,070$) with 1000 trees, $\textit{mtry}=3$, and minimum node sizes 5 for CD4 and 20 for VL.}
    \label{fig:cd4vl_cdf}
\end{figure}

\clearpage
\begin{center}
\textbf{\Large Supplementary Materials}
\end{center}
\setcounter{page}{1}
\setcounter{section}{0}
\setcounter{equation}{0}
\setcounter{table}{0}
\setcounter{figure}{0}
\renewcommand{\thepage}{S\arabic{page}}
\renewcommand{\thesection}{S\arabic{section}}
\renewcommand{\thetable}{S\arabic{table}}
\renewcommand{\thefigure}{S\arabic{figure}}
\renewcommand{\theequation}{S\arabic{equation}}

\section{Technical Details and Proofs}
\label{sec:supp_technical}

\subsection{Rank-Based Cost-Complexity Pruning}
\label{sec:supp_pruning}
A Wilcoxon regression tree structure can also be pruned using a rank-based cost-complexity measure analogous to that for the CART cost-complexity pruning \citep{CARTbook}. Let $T_0$ denote the large tree grown using the Wilcoxon splitting rule in (\ref{eq:loss_rank}). Let $T \subseteq T_0$ denote a subtree and $\mathcal{L}(T)$ denote the set of terminal nodes of $T$.
For pruning, we computed a fixed set of global midranks $a_i$ for the outcomes among all observations used to grow $T_0$. For any subtree $T$, define the global-rank impurity:
\begin{align*}
\mathcal{R}_{\text{rank}} (T)= \sum_{\ell \in \mathcal{L}(T)} \sum_{i: \vecx_i \in \ell} (a_i - \bar{a}_\ell)^2, \quad \bar{a}_\ell = \frac{1}{n_\ell} \sum_{i: \mathbf{X}_i \in \ell} a_i,
\end{align*}
where $\ell$'s are terminal nodes of subtree $T$ and $n_\ell$ is the number of observations in terminal node $\ell$.
The rank-based cost-complexity pruning criterion is
\begin{align*}
C_\lambda(T)= \mathcal{R}_{\text{rank}} (T) + \lambda |\mathcal{L}(T)|, \quad \lambda \geq 0,
\end{align*}
where $|\mathcal{L}(T)|$ is the number of terminal nodes in tree $T$, $\lambda$ denotes the cost assigned to each terminal node and $\lambda |\mathcal{L}(T)|$ is the complexity penalty. The pruned tree is selected from the nested sequence of subtrees minimizing $C_\lambda(T)$.
The tuning parameter $\lambda$ can be selected by cross-validation.

\subsection{Proof of Proposition \ref{thm:prop1}}
\label{pf:z}
\begin{proof}
Consider a fixed parent node $A$ with size $n_A$, and a candidate split $(j,s)$ inducing left and right child nodes $A_L = A_L(j,s)$ and $A_R = A_R(j,s)$ of sizes $n_L$ and $n_R$, where $n_L + n_R = n_A$. 
Let $\mathcal{I}_A= \{i: \mathbf{X}_i \in A\}$, $\mathcal{I}_L= \{i: \mathbf{X}_i \in A_L\}$ and $\mathcal{I}_R= \{i: \mathbf{X}_i \in A_R\}$.
Let $r_{i,A}$ denote the midrank of $Y_i$ among $\{Y_k: k\in \mathcal{I}_A\}$.
Let $\bar{r}_A$, $\bar{r}_L$, and $\bar{r}_R$ denote the mean midranks in $A$, $A_L$ and $A_R$, respectively. 
The rank-based impurity reduction for this split is
$$\Delta_{\text{rank}}= \sum_{i \in \mathcal{I}_A} (r_{i,A} - \bar{r}_A)^2- \left\{\sum_{i \in \mathcal{I}_L} (r_{i,A} - \bar{r}_L)^2 + \sum_{i \in \mathcal{I}_R} (r_{i,A} - \bar{r}_R)^2 \right\}.$$

Using ANOVA decomposition, we can have
$$\sum_{i \in \mathcal{I}_A} (r_{i,A} - \bar{r}_A)^2= \sum_{i \in \mathcal{I}_L} (r_{i,A} - \bar{r}_L)^2 + \sum_{i \in \mathcal{I}_R} (r_{i,A} - \bar{r}_R)^2+ n_L (\bar{r}_L- \bar{r}_A)^2 +n_R (\bar{r}_R- \bar{r}_A)^2$$
So, $$\Delta_{\text{rank}}= n_L (\bar{r}_L- \bar{r}_A)^2 +n_R (\bar{r}_R- \bar{r}_A)^2$$
Note that 
$$\bar{r}_A= \frac{n_L \bar{r}_L + n_R \bar{r}_R}{n_A} \Rightarrow \bar{r}_L- \bar{r}_A= \frac{n_R}{n_A} (\bar{r}_L- \bar{r}_R),\; \bar{r}_R- \bar{r}_A= -\frac{n_L}{n_A} (\bar{r}_L- \bar{r}_R).$$
Hence,
\begin{align*}
\Delta_{\text{rank}} &= n_L \left(\frac{n_R}{n_A}\right)^2 (\bar{r}_L - \bar{r}_R)^2
 +  n_R \left(\frac{n_L}{n_A}\right)^2 (\bar{r}_L - \bar{r}_R)^2 
= \frac{n_L n_R}{n_A} (\bar{r}_L - \bar{r}_R)^2.
\end{align*}

Let $S_L$ denote the rank sum of the left child node $A_L$, i.e., $S_L= \sum_{i \in \mathcal{I}_L} r_{i,A}$. Let $U_A$ denote the normalized Mann-Whitney U statistic for comparing outcomes in $A_L$ and $A_R$. $U_A$ is often calculated as $$U_A= \frac{S_L- \frac{1}{2} n_L (n_L+1)}{n_L n_R}.$$ 
Then by definition, $n_L \bar{r}_L=S_L$, and $n_R \bar{r}_R= n_A \bar{r}_A- S_L$. Then,
\begin{align*}
\bar{r}_L - \bar{r}_R
&= \frac{S_L}{n_L} - \frac{n_A \bar{r}_A - S_L}{n_R} 
= S_L\left(\frac{1}{n_L} + \frac{1}{n_R}\right) - \frac{n_A \bar{r}_A}{n_R} \\
&= \frac{n_A}{n_L n_R} S_L - \frac{n_A}{n_R} \bar{r}_A 
= \frac{n_A}{n_L n_R} \bigl(S_L - n_L \bar{r}_A\bigr).
\end{align*}
So, $$S_L- n_L \bar{r}_A= S_L - n_L \cdot \frac{n_A+1}{2}= n_L n_R \left(U_A -\frac{1}{2} \right).$$
Because using midranks guarantees that the total rank sum remains unchanged even in the presence of ties, we have $\bar{r}_A= \frac{1}{2} (n_A +1)$. Then, we can have 
\begin{align*}
\Delta_{\text{rank}} &= \frac{n_A}{ n_L n_R} (S_L- n_L \bar{r}_A)^2
= \frac{n_A}{n_L n_R} \cdot n_L^2 n_R^2 \left(U_A- \frac{1}{2} \right)^2 
= n_A n_L n_R \left( U_A- \frac{1}{2} \right)^2
\end{align*}

Under the null hypothesis that the left and right node distributions are identical and observations are independent, $E(U_A)= \frac{1}{2}$ and the variance of $U_A$ is
$$\Var(U_A)= \frac{n_A+1- \frac{\kappa}{ n_A (n_A-1)}}{ 12 n_L n_R},$$
where $\kappa= \sum_{k=1}^K (t_k^3- t_k)$ is the tie-correction factor for tied rank groups of sizes $t_1, \cdots, t_K$. So, when both child-node sizes are sufficiently large, under the null hypothesis, the standardized Wilcoxon rank sum test statistic is $$z= \frac{U_A- \frac{1}{2}}{ \sqrt{\Var (U_A)}} \xrightarrow{d} N(0,1).$$
Consequently, we will have
$$\Delta_{\text{rank}}= \frac{n_A}{12} \left[ n_A+1 -\frac{\kappa}{ n_A (n_A-1)} \right] z^2$$

Therefore, within a fixed parent node, the rank-based impurity reduction $\Delta_{\text{rank}}$ is proportional to $z^2$, and maximizing $\Delta_{\text{rank}}$ is equivalent to maximizing $z^2$ and minimizing the associated Wilcoxon p-value approximated by $p_W(z) \approx 2\{1-\Phi(|z|)\}$, where $\Phi(\cdot)$ is the CDF of the standard normal distribution. The $p$-value is used for splitting and not used for formal inference.
\end{proof}

\subsection{Consistency of the WRF Distribution Estimator}
Let $\mathcal{D}_n= \{( \mathbf{X}_i, Y_i): 1\leq i \leq n\}$ be an i.i.d. sample from the distribution of $(\mathbf{X}, Y)$. Suppose $p$ (the dimension of covariate space) and $B$ (number of trees in the forest) are fixed as $n\to \infty$. For simplicity, the consistency result in this section is stated for numerical covariates rescaled to $\mathbf{X}_i \in \mathcal{X} = [0, 1]^p$. Let $F(y \mid \vecu) := F_Y(y \mid \mathbf{X}= \vecu)$.

Following the notations in Section~\ref{sec:3_1-WRFalgo} in the main text, for tree $T_b$, let $\mathcal{I}_b \subset \{1, \ldots, n\}$ denote the subsample used to construct $T_b$. $\mathcal{I}_b$ are drawn independently of $\mathcal{D}_n$ and are used to select the splits and and define the tree partition. Once the tree has been constructed, all $n$ training observations are passed through the fitted tree to estimate the terminal-node CDFs.
Let $L_b(\vecx)$ denote the terminal leaf node containing $\vecx$, and $N_b(\vecx)= \sum_{i=1}^n I\{ \mathbf{X}_i \in L_b(\vecx)\}$.
The tree-level and WRF conditional CDF estimators, $\hat{F}_b(y \mid \vecx)$ and $\hat{F}(y \mid \vecx)$, are defined in Equations \eqref{eq:tree_ECDF} and \eqref{eq:ECDF}, respectively.
The nonnegative forest weights $w_i(\vecx)$'s are defined in Equation \eqref{eq:ECDF_weight} and satisfy $\sum_{i=1}^n w_i(\vecx)=1$.

The following assumptions are motivated by the consistency argument for quantile regression forests \citep{meinshausen2006} and the conditions in Theorem~1 of \cite{stone1977}.

\begin{assumption}[Leaf size and vanishing maximal weight]
\label{ass:leafsize_weights}
Each terminal leaf corresponds to a rectangular region obtained from recursive binary splits on the covariates. Let $\mathcal{L}(T_b)$ denote the set of terminal leaf nodes of tree $T_b$. For $\ell\in \mathcal{L}(T_b)$, define $N_{b\ell}=\sum_{i=1}^n I(\mathbf{X}_i\in \ell)$, and let
$$ N_{\min,n} = \min_{1\leq b\leq B}\min_{\ell\in \mathcal{L}(T_b) } N_{b\ell}, 
\qquad N_{\max,n} = \max_{1\leq b\leq B}\max_{\ell \in \mathcal{L}(T_b)} N_{b\ell}.$$
There is a deterministic sequence $\alpha_n$ such that $$\frac{\alpha_n}{ \log n} \to \infty, \quad P(N_{\min, n} \geq \alpha_n) \xrightarrow{p} 1, \quad \frac{N_{\max, n}}{n} \xrightarrow{p} 0.$$ 
So, the minimum size of a leaf node increases with the sample size, and the maximum leaf-node size as a proportion of the total sample size vanishes.

Let $\omega_n(\vecx):= \max_{1\leq i\leq n} w_i(\vecx)$ denote the maximal WRF weight. When $N_{\min, n} \geq \alpha_n$, $$0< \omega_n(\vecx) \leq N_{\min, n}^{-1} \leq \alpha_n^{-1}.$$
Since $P(N_{\min,n} \geq \alpha_n) \to 1$ and $\alpha_n \to \infty$, we have $\omega_n (\vecx) \xrightarrow{p}0$. 
\end{assumption}


\begin{assumption}[Local continuity in Kolmogorov distance]
\label{ass:localcont}
For a fixed prediction point $\vecx$, define
$$\omega(\vecx, r)= \sup_{\vecu \in \mathrm{supp} (P_\mathbf{X}), \Vert \vecu- \vecx \Vert_1 \leq r} \sup_{y \in \real} |F(y \mid \vecu)- F(y \mid \vecx)|,$$
for some $r>0$. Then, $\omega(\vecx, r) \to 0$ as $r \to 0$.
\end{assumption}

\begin{assumption}[Pointwise localization of the forest weights]
\label{ass:nearx}
For every fixed $\vecx\in \mathrm{supp}(P_\mathbf{X}) \subseteq \mathcal X$ and $\forall \epsilon>0$,
$$ R_n(\epsilon,\vecx) := \sum_{i=1}^n w_i(\vecx) I\{ \lVert\mathbf{X}_i-\vecx \rVert_1>\epsilon \} \xrightarrow{p} 0.$$
\end{assumption}

Assumption \ref{ass:leafsize_weights} guarantees that every terminal leaf contains enough observations, preventing any single observation from dominating the weighted empirical distribution. 
This condition is sufficient for the uniform concentration of the empirical CDF across all possible splits.
Assumption \ref{ass:localcont} is a local continuity smoothness condition in Komogorov distance on the conditional distribution for observed outcomes. Assumption \ref{ass:nearx} requires the total weight assigned to observations with covariates separated from $\vecx$ to vanish. This condition is motivated by the localization conditions for probability-weighted local averages established by \cite{stone1977}.
We impose Assumption~\ref{ass:nearx} directly on the WRF weights, and we do not claim that it follows from the Wilcoxon splitting criterion alone.

For the proof of WRF consistency, we separate the estimation error into a localization component and a component for empirical CDFs. Assumptions~\ref{ass:localcont} and \ref{ass:nearx} control the localization term by ensuring that most of the forest weight is placed on covariate values close to the fixed $\vecx$, where the conditional CDFs are close to $F(y \mid \vecx)$. 

The component for empirical CDFs requires an argument of adaptive concentration because the outcomes in each tree-building subsample are used to select the splits, and these observations also contribute to the empirical CDF within the resulting terminal node. Thus, the terminal-leaf memberships depend on the outcomes, and a concentration bound for a prespecified leaf cannot directly apply. 
We address this issue in Lemma~\ref{lem:lemma} by establishing a uniform concentration bound over all possible axis-aligned rectangles. We do not track all possible split sequences, and instead, conditional on the observed covariates and subsample index sets, we consider all index sets induced on each tree subsample by axis-aligned rectangular regions. 

Given the observed covariates, for tree $T_b$ and an axis-aligned rectangular region $A \subseteq \mathcal{X}$, let $J_b(A)= \{i \in \{1, \ldots, n\}: \mathbf{X}_i \in A \}$ be the set of indices of all training observations that fall in $A$, where $A$ is some axis-aligned rectangular region under $\mathcal{X}$. Let $\mathcal{J}_{b,n}= \{J_b(A): A \text{ is a axis-aligned rectangular region}\}$ denote the collection of all distinct index sets of such form. Because recursive axis-aligned splitting always yields rectangular terminal nodes, $\mathcal{J}_{b,n}$ covers the full-sample membership of every terminal leaf that tree $T_b$ could select, regardless of how the splits are determined.

For any covariate, the $n$ observed values divide the real line into at most $n +1$ intervals. Moving an endpoint within one of these intervals does not change which training observations are selected, and the membership changes only when the endpoint crosses an observed value. So, each of the lower and upper endpoints for the intervals takes at most $n+1$ relevant positions, and an interval along one covariate can induce at most $(n +1)^2$ distinct subsets of observations. For $p$ covariates, we can have $|\mathcal{J}_{b,n}| \leq (n+1)^{2p}$, which is an upper bound on the number of distinct sample memberships. 
This argument is inspired by the adaptive concentration approach in \cite{wagerwalther2015}, which simultaneously controls estimation error over all leaves that may be selected using the outcomes. Our approach differs as we consider empirical CDFs rather than with-leaf outcome means. Thus, Lemma~\ref{lem:lemma} uniformly controls the empirical CDF error over all leaf memberships in $\mathcal{J}_{b, n}$'s that contain at least $\alpha_n$ observations.

\begin{lemma}[Adaptive concentration of empirical CDFs over possible leaf memberships]
\label{lem:lemma}
Under Assumption~\ref{ass:leafsize_weights}, define
$$\gamma_n := \max_{1 \leq b \leq B} \max_{J \in \mathcal{J}_{b,n}: |J| \geq \alpha_n} \sup_{y \in \real} \left| \frac{1}{|J|} \sum_{i \in J} [I(Y_i \leq y)- F(y \mid \mathbf{X}_i)] \right|.$$
Then, $\gamma_n \xrightarrow{p} 0$.
\end{lemma}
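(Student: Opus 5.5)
Conditional on the covariates $\mathbf{X}_1,\ldots,\mathbf{X}_n$, the centered indicators $\xi_i(y) = I(Y_i\le y) - F(y\mid \mathbf{X}_i)$ are independent, mean zero and bounded in $[-1,1]$. This holds because the data are i.i.d. and the subsample index sets $\mathcal{I}_b$ are drawn independently of $\mathcal{D}_n$. The key point is that each collection $\mathcal{J}_{b,n}$ is determined by the covariates alone. So although the tree selects its leaves using the outcomes, every candidate index set $J$ is fixed before we look at the $Y_i$'s, and we may apply a concentration inequality to each fixed $J$ and then take a union bound. Accordingly, I would work conditionally on $\mathbf{X}_1,\ldots,\mathbf{X}_n$ throughout and remove the conditioning at the end by dominated convergence.

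\textbf{Step 1 (fixed $J$, fixed $y$).} For a fixed $J$ with $|J|\ge\alpha_n$ and a fixed $y$, Hoeffding's inequality gives
\[
P\Big(\Big|\tfrac{1}{|J|}\textstyle\sum_{i\in J}\xi_i(y)\Big|>\epsilon \,\Big|\, \mathbf{X}\Big)\le 2\exp(-2|J|\epsilon^2)\le 2\exp(-2\alpha_n\epsilon^2).
\]

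\textbf{Step 2 (supremum over $y$).} This is the step I expect to be the main obstacle. The $F(y\mid\mathbf{X}_i)$ differ across $i$, and the outcome may be discrete or mixed, so the DKW inequality does not apply directly. I would use a finite-grid bracketing argument. Let $\bar F_J(y)=|J|^{-1}\sum_{i\in J}F(y\mid\mathbf{X}_i)$, which is a nondecreasing CDF (possibly with atoms). Pick $y_0=-\infty<y_1<\cdots<y_K=+\infty$ with $K\le \lceil 2/\epsilon\rceil$ so that $\bar F_J(y_k^-)-\bar F_J(y_{k-1})\le\epsilon$. Controlling the deviations at both $y_k$ and $y_k^-$ (that is, for the events $\{Y_i\le y_k\}$ and $\{Y_i<y_k\}$) yields, by monotonicity,
\[
\sup_y\Big|\tfrac1{|J|}\textstyle\sum_{i\in J}\xi_i(y)\Big|\le \max_k\big(\text{grid deviations}\big)+\epsilon .
\]
So at most $4/\epsilon+2$ points need to be controlled for each $J$. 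This is the standard Glivenko--Cantelli bracketing argument, and it covers atoms.

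\textbf{Step 3 (union bound).} There are $B$ trees, at most $(n+1)^{2p}$ sets $J$ per tree, and $O(1/\epsilon)$ grid points per set. The union bound gives
\[
P(\gamma_n>2\epsilon\mid\mathbf{X})\le B(n+1)^{2p}(4/\epsilon+2)\cdot 2e^{-2\alpha_n\epsilon^2}
=\exp\!\big(2p\log(n+1)-2\alpha_n\epsilon^2+O(1)\big).
\]
Because $p$ and $B$ are fixed and $\alpha_n/\log n\to\infty$, this bound tends to $0$ for every $\epsilon>0$, and it is uniform in $\mathbf{X}$.

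Taking expectations over $\mathbf{X}$ then gives $\gamma_n\xrightarrow{p}0$. The condition $P(N_{\min,n}\ge\alpha_n)\to1$ in Assumption~\ref{ass:leafsize_weights} is not needed for the lemma itself. Its role comes later, in guaranteeing that the realized leaves lie in the family $\{J:|J|\ge\alpha_n\}$, so that $\gamma_n$ actually bounds the leaf-wise error of each tree.
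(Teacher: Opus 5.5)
Your proof is correct and has the same skeleton as the paper's. You condition on the covariates so that the candidate memberships $\mathcal{J}_{b,n}$ are non-random, apply Hoeffding, and take a union bound over at most $B(n+1)^{2p}$ sets using $\alpha_n/\log n\to\infty$. (The paper conditions on $\mathcal{H}_n$, which also includes the $\mathcal{I}_b$, but $J_b(A)$ as defined does not depend on them.)

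The genuine difference is your Step 2, the handling of the supremum over $y$.

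\textbf{The paper's route.} It reduces $\sup_y|\hat F_J(y)-\bar F_J(y)|$ exactly to the $2m$ random points $Y_k$ and $Y_k^-$, $k\in J$. This works because on each gap between consecutive distinct order statistics, $\hat F_J$ is flat and $\bar F_J$ is monotone. These evaluation points depend on the data, so the paper conditions on $Y_k$ and applies Hoeffding to the leave-one-out sum $D_{Jk}$ over the other $m-1$ observations. It absorbs the self-term $1-F_k(Y_k)$ into the threshold $m\epsilon-1$, which requires $m\epsilon\ge 2$. This gives a per-set bound of $4m\exp[-2(m\epsilon-1)^2/(m-1)]$, and hence an extra factor $n$ in the final union bound.

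\textbf{Your route.} Your bracketing grid is built from quantiles of $\bar F_J$, for instance $y_k=\inf\{y:\bar F_J(y)\ge k\epsilon\}$, with coinciding points merged at atoms. This grid is fixed given $\mathbf{X}$, so plain Hoeffding applies at each point. You need only $O(1/\epsilon)$ points per set. The cost is an additive $\epsilon$, so you bound $P(\gamma_n>2\epsilon)$ rather than $P(\gamma_n>\epsilon)$.

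\textbf{Comparison.} The prefactors $n^{2p}$ versus $n^{2p+1}$ are negligible against $e^{-2\alpha_n\epsilon^2}$, so both routes reach the same conclusion. Yours is the textbook Glivenko--Cantelli bracketing argument and avoids conditioning on $Y_k$. The paper's avoids building a grid and reduces the supremum exactly rather than up to $\epsilon$.

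Your remark that $P(N_{\min,n}\ge\alpha_n)\to 1$ plays no role in the lemma also matches the paper. The paper uses that condition only in the proof of the consistency theorem.
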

\begin{proof}
Now, fix $\epsilon>0$. Since $\frac{\alpha_n }{\log n} \to \infty$, we have $\alpha_n \to \infty$. So, for all sufficiently large $n$, $\alpha_n \geq \max \{2, 2/\epsilon\}$. We hereafter consider such $n$ for this proof. 

Let $\mathcal{H}_n= \sigma (\mathbf{X}_1, \ldots, \mathbf{X}_n, \mathcal{I}_1, \ldots, \mathcal{I}_B)$ denote information contained in all the observed covariates and the subsample index sets for all trees in the forest. Conditional on $\mathcal{H}_n$, the outcomes $Y_1, \ldots, Y_n$ are independent, with $P(Y_i \leq y \mid \mathcal{H}_n)= E[I(Y_i \leq y) \mid \mathcal{H}_n]= F(y \mid \mathbf{X}_i)$.

Fix a nonempty $J$ considered in the definition of $\gamma_n$, let $m= |J|$ such that $m \geq \alpha_n$, and thus, $m\geq 2$ and $m \epsilon \geq 2$. Let $F_i(y)= F(y \mid \mathbf{X}_i)$, and let $$\gamma_J= \sup_{y \in \real} \left| \frac{1}{m} \sum_{i \in J} [I(Y_i \leq y) - F_i(y)] \right|.$$ 
Let $\hat{F}_J(y)= \frac{1}{m} \sum_{i\in J} I(Y_i \leq y)$. Let $\bar{F}_J(y)= \frac{1}{m} \sum_{i \in J} F_i(y)$. So $\gamma_J= \sup_{y \in \real} |\hat{F}_J(y) - \bar{F}_J(y)|$.

Note that $\bar{F}_J(y)$ is non-decreasing, and $\hat{F}_J(y)$ is a non-decreasing step function, where jumps occur only at observed $Y_i$, $i \in J$.
Additionally, define $$F_i(t^-)= \lim_{s \uparrow t} F_i(s), \quad \hat{F}_J(t^-)= \frac{1}{m} \sum_{i \in J} I(Y_i < t), \quad \bar{F}_J(t^-)= \frac{1}{m} \sum_{i \in J} F_i (t^-).$$
Let $Y_{(i)}$ denote the $i$th order statistic. In the interval $[Y_{(i)}, Y_{(i+1)})$ ($i \in J$), $\hat{F}_J$ is flat and only takes value at $Y_{(i)}$, and $\bar{F}_J$ is non-decreasing. Thus, in addition to $Y_{(i)}$, we also need to consider $Y_{(i+1)}^-$, the left limit of $Y_{(i+1)}$. So,
\begin{align*}
P(\gamma_J > \epsilon \mid \mathcal{H}_n) &\leq \underbrace{ \sum_{k \in J} P(|\hat{F}_J(Y_k) - \bar{F}_J(Y_k)| > \epsilon \mid \mathcal{H}_n) }_{ (\#)} + \underbrace{ \sum_{k \in J} P(|\hat{F}_J(Y_k^-) - \bar{F}_J(Y_k^-)| > \epsilon \mid \mathcal{H}_n) }_{ (\#\#)}.
\end{align*}

For (\#), we have
\begin{align*}
m[\hat{F}_J(Y_k) - \bar{F}_J(Y_k)] &= I(Y_k= Y_k) + \sum_{i\in J, i\neq k} I(Y_i \leq Y_k) - F_k(Y_k)- \sum_{i\in J, i\neq k} F_i(Y_k)\\
&= 1-F_k (Y_k)+ \sum_{i\in J, i\neq k} [I(Y_i \leq Y_k) - F_i(Y_k)].
\end{align*}
Note that $$E[I(Y_i \leq Y_k) \mid \mathcal{H}_n, Y_k]= P(Y_i \leq Y_k \mid \mathcal{H}_n, Y_k) = F_i(Y_k).$$
Let $D_{Jk}= \sum_{i\in J, i\neq k} [I(Y_i \leq Y_k) - F_i(Y_k)]$. Then, $E(D_{Jk})=0$. So,
\begin{align*}
P(|\hat{F}_J(Y_k) &- \bar{F}_J(Y_k)| >\epsilon \mid \mathcal{H}_n, Y_k) = P[|1- F_k(Y_k)+ D_{Jk}| >m \epsilon \mid \mathcal{H}_n, Y_k]\\
\leq& P(D_{Jk}> m\epsilon-1 + F_k(Y_k) \mid \mathcal{H}_n, Y_k) 
+ P(D_{Jk}< -m\epsilon-1 + F_k(Y_k) \mid \mathcal{H}_n, Y_k).
\end{align*}
For $i\in J, i\neq k$, when $Y_i \leq Y_k$, $I(Y_i \leq Y_k)- F_i(Y_k)= 1-F_i(Y_k)$; when $Y_i > Y_k$, $I(Y_i \leq Y_k)- F_i(Y_k)= -F_i(Y_k)$. Then, $1-F_i(Y_k) + F_i(Y_k)=1$. So, by Hoeffding's inequality,
\begin{align*}
P(D_{Jk}> m\epsilon-1 + F_k(Y_k) \mid \mathcal{H}_n, Y_k) &\leq \exp \left[ -\frac{ 2(m\epsilon-1 +F_k(Y_k))^2 }{m-1} \right].
\end{align*}
Similarly,
\begin{align*}
P(D_{Jk}< -m\epsilon-1 + F_k(Y_k) \mid \mathcal{H}_n, Y_k) &\leq \exp \left[ -\frac{ 2(m\epsilon+1 -F_k(Y_k))^2 }{m-1} \right].
\end{align*}
Since $0\leq F_k(Y_k) \leq 1$, a mutual upper bound for the two probabilities above is $\exp \left[ -\frac{ 2(m\epsilon -1)^2}{ m-1} \right]$. So,
$$P(|\hat{F}_J(Y_k)- \bar{F}_J(Y_k)| >\epsilon \mid \mathcal{H}_n, Y_k) \leq 2\exp \left[ -\frac{ 2(m\epsilon -1)^2}{ m-1} \right].$$
Then,
\begin{align*}
P(|\hat{F}_J(Y_k) - \bar{F}_J(Y_k)| >\epsilon \mid \mathcal{H}_n) &= E \left\{ P\left[ |\hat{F}_J(Y_k) - \bar{F}_J(Y_k)| >\epsilon \mid \mathcal{H}_n, Y_k \right] \Big| \mathcal{H}_n \right\}\\
&\leq 2\exp \left[ -\frac{ 2(m\epsilon -1)^2}{ m-1} \right].
\end{align*}
So
$$(\#) =\sum_{k\in J} P(|\hat{F}_J(Y_k) - \bar{F}_J(Y_k)| >\epsilon \mid \mathcal{H}_n) \leq 2m\exp \left[ -\frac{ 2(m\epsilon -1)^2}{ m-1} \right].$$

Then, for (\#\#), we have $$m[\hat{F}_J(Y_k^-) - \bar{F}_J(Y_k^-)] = -F_k(Y_k^-) + \sum_{i\in J, i\neq k} [I(Y_i <Y_k)- F_i(Y_k^-)].$$
Let $D_{Jk}^-= \sum_{i\in J, i\neq k} [I(Y_i <Y_k)- F_i(Y_k^-)]$. Note that, $$E[I(Y_i < Y_k) \mid \mathcal{H}_n, Y_k]= P(Y_i <Y_k \mid \mathcal{H}_n, Y_k) = F_i(Y_k^-).$$
Similar to the previous part, by Hoeffding's inequality, we have
\begin{align*}
P(|\hat{F}_J(Y_k^-) &- \bar{F}_J(Y_k^-)| >\epsilon \mid \mathcal{H}_n, Y_k) = P(|-F_k(Y_k^-) + D_{Jk}^-| > m\epsilon \mid \mathcal{H}_n, Y_k)\\
&\leq P(D_{Jk}^- > m\epsilon+ F_k(Y_k^-) \mid \mathcal{H}_n, Y_k^-)+ P(D_{Jk}^- < -m\epsilon+ F_k(Y_k^-) \mid \mathcal{H}_n, Y_k)\\
&\leq \exp \left[ -\frac{2 (m\epsilon + F_k(Y_k^-))^2 }{m-1} \right] + \exp \left[ -\frac{2 (m\epsilon - F_k(Y_k^-))^2 }{m-1} \right]\\
&\leq 2\exp \left[ -\frac{2 (m\epsilon -1)^2 }{m-1} \right].\\
\Rightarrow \sum_{k\in J} &P \left( |\hat{F}_J(Y_k^-)- \bar{F}_J(Y_k^-)| > \epsilon \mid \mathcal{H}_n \right) \leq 2m \exp \left[ -\frac{2 (m\epsilon -1)^2 }{m-1} \right].
\end{align*}
Thus, combining the results for (\#) and (\#\#), we have $$P(\gamma_J > \epsilon \mid \mathcal{H}_n) \leq (\#) + (\#\#) \leq 4m \exp \left[ -\frac{2 (m\epsilon -1)^2 }{m-1} \right].$$
Note that 
\begin{align*}
\frac{2 (m\epsilon -1)^2 }{m-1} &= 2\epsilon^2 (m-1)+ 4\left( \epsilon- \frac{1}{2} \right)^2 -1 + \frac{2 (\epsilon-1)^2}{ m-1} \\
&\geq 2\epsilon^2 (m-1)-1 \geq 2\epsilon^2 (\alpha_n -1) -1.
\end{align*}

Recall that $|\mathcal{J}_{b,n}| \leq (n+1)^{2p}$ and $m \leq n$. Then, since $\frac{\alpha_n}{ \log n} \to \infty$,
\begin{align*}
P(\gamma_n >\epsilon \mid \mathcal{H}_n) &\leq \sum_{b=1}^B \sum_{J\in \mathcal{J}_{b,n}, |J| \geq \alpha_n} 4|J| \exp \left[ -\frac{2 (|J| \epsilon -1)^2 }{ |J|-1} \right] \\
&\leq 4Bn(n+1)^{2p} \exp \left[1- 2\epsilon^2 (\alpha_n -1) \right]\\
&= 4Bn(n+1)^{2p} \cdot n^{\frac{1 +2 \epsilon^2 }{\log n} -2\epsilon^2 \frac{\alpha_n}{ \log n}} \to 0.
\end{align*}
Hence, $\gamma_n \xrightarrow{p} 0$. 
\end{proof}

Using the concentration bound in Lemma~\ref{lem:lemma} and Assumptions~\ref{ass:leafsize_weights}--\ref{ass:nearx}, we could derive the following consistency result for the WRF conditional CDF estimator.

\begin{theorem}[Consistency of the WRF distribution estimator]
\label{thm:consistency}
Suppose Assumptions~\ref{ass:leafsize_weights}--\ref{ass:nearx} hold. Then, for every fixed $\vecx \in \mathcal{X}$,
$$ \sup_{y \in \real} \left| \hat F(y \mid \mathbf{X}= \vecx)-F_Y(y \mid \mathbf{X}=\vecx) \right| \xrightarrow{p}0.$$
\end{theorem}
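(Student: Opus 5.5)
Fix $\vecx$ and decompose the error using the weight representation \eqref{eq:ECDF}. Since $\sum_i w_i(\vecx)=1$, for every $y$,
\begin{align*}
\hat F(y\mid\vecx)-F(y\mid\vecx) = \underbrace{\sum_{i=1}^n w_i(\vecx)\bigl[I(Y_i\le y)-F(y\mid \mathbf{X}_i)\bigr]}_{\text{(I): stochastic term}} + \underbrace{\sum_{i=1}^n w_i(\vecx)\bigl[F(y\mid\mathbf{X}_i)-F(y\mid\vecx)\bigr]}_{\text{(II): localization term}}.
\end{align*}
The plan is to bound $\sup_y$ of each term separately and show both vanish in probability. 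Term (I) will be handled by Lemma~\ref{lem:lemma}, and term (II) by Assumptions~\ref{ass:localcont} and \ref{ass:nearx}.

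For (I), I rewrite the forest weights tree by tree:
\[
\sum_i w_i(\vecx)\bigl[I(Y_i\le y)-F(y\mid\mathbf{X}_i)\bigr]=\frac1B\sum_{b=1}^B \frac{1}{N_b(\vecx)}\sum_{i\in J_b(L_b(\vecx))}\bigl[I(Y_i\le y)-F(y\mid\mathbf{X}_i)\bigr].
\]
Each leaf $L_b(\vecx)$ is an axis-aligned rectangle, so its full-sample membership $J_b(L_b(\vecx))$ lies in $\mathcal{J}_{b,n}$, and its size is $N_b(\vecx)$. On the event $\{N_{\min,n}\ge\alpha_n\}$, every such membership satisfies $|J|\ge\alpha_n$. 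Each inner supremum over $y$ is then at most $\gamma_n$, and so is their average, giving
\[
\sup_y|\text{(I)}|\le \gamma_n + I(N_{\min,n}<\alpha_n).
\]
Both pieces tend to $0$ in probability, by Lemma~\ref{lem:lemma} and by Assumption~\ref{ass:leafsize_weights} respectively. The point is that the adaptivity of the leaves, which are chosen using the outcomes, is fully absorbed by the uniformity of Lemma~\ref{lem:lemma} over $\mathcal{J}_{b,n}$. No independence between splits and leaf outcomes is needed.

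For (II), fix $\epsilon>0$ and split the sum according to whether $\|\mathbf{X}_i-\vecx\|_1\le r$ or $>r$, for some $r>0$. On the near set, each summand satisfies $\sup_y|F(y\mid\mathbf{X}_i)-F(y\mid\vecx)|\le\omega(\vecx,r)$, since $\mathbf{X}_i\in\mathrm{supp}(P_\mathbf{X})$ almost surely. On the far set, each difference is bounded by $1$. Therefore
\[
\sup_y|\text{(II)}|\le \omega(\vecx,r)+R_n(r,\vecx).
\]
Choose $r$ so that $\omega(\vecx,r)<\epsilon/2$, which is possible by Assumption~\ref{ass:localcont}. Then $P(\sup_y|\text{(II)}|>\epsilon)\le P(R_n(r,\vecx)>\epsilon/2)\to0$ by Assumption~\ref{ass:nearx}. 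The supremum over $y$ passes inside the weighted sum because the weights are nonnegative.

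Combining the two bounds with the triangle inequality gives the claimed convergence. I expect the main obstacle to be the stochastic term (I), specifically the justification that leaf memberships fall inside the finite class $\mathcal{J}_{b,n}$ with the size constraint holding with high probability. Once that reduction is in place, Lemma~\ref{lem:lemma} does the work.

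One more check concerns Assumptions~\ref{ass:localcont} and \ref{ass:nearx}, which are stated for $\vecx\in\mathrm{supp}(P_\mathbf{X})$. I will either restrict the theorem to such $\vecx$ or note that this is implicitly assumed. The term (II) argument requires it, whereas term (I) holds for every $\vecx\in\mathcal{X}$.
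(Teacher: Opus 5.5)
Your proposal is correct and follows essentially the same route as the paper. It uses the same split into a localization term, bounded by $\omega(\vecx,r)+R_n(r,\vecx)$, and a stochastic term, bounded by $\gamma_n$ on $\{N_{\min,n}\ge\alpha_n\}$ via Lemma~\ref{lem:lemma}; your caveat that Assumption~\ref{ass:nearx} only covers $\vecx\in\mathrm{supp}(P_{\mathbf{X}})$ is a fair point the paper glosses over, and your restriction of the inner sum to $J_b(L_b(\vecx))$ supplies the leaf indicator that the paper's display omits.
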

\begin{proof}
For notational convenience, let $F_i(y)= F(y \mid \mathbf{X}_i)$, let $F_\vecx(y)= F(y \mid \mathbf{X}= \vecx)$ and let $\hat{F}_\vecx (y)= \hat{F}(y \mid \mathbf{X}= \vecx)$. Then,
\begin{align*}
\sup_{y \in \real} |\hat{F}(y \mid \vecx)&- F_\vecx (y)| = \sup_{y\in \real} \left| \sum_{i=1}^n w_i(\vecx) [I(Y_i \leq y)- F_\vecx(y)] \right|\\
&\leq \sup_{y\in \real} \bigg| \underbrace{ \sum_{i=1}^n w_i(\vecx) [F_i(y)- F_\vecx (y)] }_{=a_n(y)} \bigg| + \sup_{y\in \real} \bigg| \underbrace{ \sum_{i=1}^n w_i(\vecx) [I(Y_i \leq y) -F_i(y)] }_{=b_n(y)} \bigg|.
\end{align*}

For $a_n(y)$, we have,
\begin{align*}
\sup_{y\in \real} |a_n(y)| &\leq \sum_{i=1}^n w_i(\vecx) \cdot \underbrace{ \sup_{y \in \real} |F_i(y)- F_\vecx(y)| }_{(*)}.
\end{align*}
Let $\omega(\vecx, r)$ be defined as in Assumption~\ref{ass:localcont}. When $\Vert \mathbf{X}_i -\vecx \Vert_1 \leq r$, $(*) \leq \omega( \vecx, r) \cdot I(\Vert \mathbf{X}_i -\vecx \Vert_1 \leq r)$.
When $\Vert \mathbf{X}_i -\vecx \Vert_1 >r$, $(*) \leq I(\Vert \mathbf{X}_i -\vecx \Vert_1 >r)$. So,
\begin{align*}
\sup_{y\in \real} |a_n(y)| &\leq \omega(\vecx, r) \sum_{i=1}^n w_i(\vecx) \cdot I(\Vert \mathbf{X}_i- \vecx \Vert_1 \leq r)+ \sum_{i=1}^n w_i(\vecx) \cdot I(\Vert \mathbf{X}_i- \vecx \Vert_1 >r)\\
&\leq \omega(\vecx, r)+ R_n(r, \vecx).
\end{align*}
For all $\epsilon>0$, choose $r$ such that $\omega(\vecx, r) < \frac{\epsilon}{2}$. Then, by Assumption~\ref{ass:nearx},
\begin{align*}
P\left( \sup_{y \in \real} |a_n(y)| >\epsilon \right) &\leq P[\omega (\vecx, r)+ R_n(\vecx, r)> \epsilon]\\
&\leq P\left( R_n(\vecx, r) > \frac{\epsilon}{2} \right) \to 0 \Rightarrow \sup_{y \in \real} |a_n(y)| \xrightarrow{p} 0.
\end{align*}

Now, for $b_n(y)$, by Equation~\eqref{eq:ECDF_weight} of the main text,
\begin{align*}
\sup_{y\in \real} |b_n(y)| &\leq \frac{1}{B} \sum_{b=1}^B \sup_{y\in \real} \left| \frac{1}{N_b(\vecx)} \sum_{i=1}^n [I(Y_i \leq y)- F_i(y)] \right|\\
&\leq \frac{1}{B} \sum_{b=1}^B \gamma_n = \gamma_n,
\end{align*}
where $\gamma_n$ is defined in Lemma~\ref{lem:lemma}. Note that $\gamma_n$ only controls for sets $J$'s satisfying $|J| \geq \alpha_n$, where $\alpha_n$ is the deterministic sequence defined in Assumption~\ref{ass:leafsize_weights}. So, we need to consider the cases when $N_{\min, n} \geq \alpha_n$ and when $N_{\min, n} < \alpha_n$. Then, $\forall \epsilon>0$,
\begin{align*}
P\left( \sup_{y\in \real} |b_n(y)| > \epsilon \right) =& \underbrace{ P\left( \left\{ \sup_{y\in \real} |b_n(y)| > \epsilon \right\} \cap \{N_{\min, n} \geq \alpha_n\} \right) }_{ \leq P(\gamma_n > \epsilon) \to 0 } \\
&+ \underbrace{ P\left( \left\{ \sup_{y\in \real} |b_n(y)| > \epsilon \right\} \cap \{N_{\min, n} < \alpha_n\} \right) }_{ \leq P(N_{\min, n} < \alpha_n) \to 0 }\\
\to& \;0 \Rightarrow \sup_{y \in \real} |b_n(y)| \xrightarrow{p} 0.
\end{align*}
Therefore, $ \sup_{y \in \real} \left| \hat F(y \mid \mathbf{X}= \vecx)-F_Y(y \mid \mathbf{X}=\vecx) \right| \xrightarrow{p}0.$
\end{proof}

The consistency results applies to continuous outcomes, discrete ordinal outcomes, and continuous outcomes subject to detection limits. Here we show that for ordinal discrete outcomes, the estimated conditional probability mass is also consistent. For any fixed point $a \in \mathcal{Y}$, let $\hat{P} (a \mid \vecx) = \hat{F}(a \mid \vecx)- \hat{F}(a^- \mid \vecx)$ denote the conditional probability mass at point $a$, where $F(a^- \mid \vecx)= \lim_{t \uparrow a} F(t \mid \vecx)$. Let $\hat{P}(a \mid \vecx)= \hat{F} (a \mid \vecx)- \hat{F}(a^- \mid \vecx)$ denote the estimated probability mass at $a$. Then,
\begin{align*}
|\hat{P} (a\mid \vecx)- P(a\mid \vecx)| &\leq |\hat{F}(a \mid \vecx) - F(a \mid \vecx)| + |\hat{F}(a^- \mid \vecx) - F(a^- \mid \vecx)|\\
&\leq 2\sup_{y \in \real} |\hat{F}(y \mid \vecx) - F(y \mid \vecx)|.
\end{align*}
Therefore, $\hat{P} (a\mid \vecx) \xrightarrow{p} P(a\mid \vecx)$.

The consistency result also applies to continuous outcomes subject to detection limits.
For instance, let $Y^*$ be a continuous latent outcome and suppose $Y$ is the observed outcome. Suppose the observed $Y$ is subject to a lower detection limit $d_L$ and an upper detection limit $d_U$ ($d_U> d_L$) such that $Y= \min\{ \max\{Y^*, d_L\}, d_U \}$. Then, 
\begin{align*}
&P(Y= d_L \mid \mathbf{X}= \vecx)= P(Y^* \leq d_L \mid \mathbf{X}= \vecx),\\
&F_Y(t \mid \mathbf{X}= \vecx) = F_{Y^*}(t \mid \mathbf{X}= \vecx), \quad d_L \leq t< d_U,\\
&P(Y=d_U \mid \mathbf{X}= \vecx)= P(Y^* \geq d_U \mid \mathbf{X}= \vecx).
\end{align*}

\section{Additional Simulation Details and Results}
\label{sec:supp_simulation}

\subsection{Data-Generating Processes and Simulation Design}
\label{sec:supp_dgp}

Figure~\ref{fig:DGP} displays the tree-structured data-generating process for the conditional mean function $\mu_1(\mathbf{X})$ in Scenario~1. The partition yields seven terminal nodes, each with probability $1/7$, using splits on $X_4$, $X_1$, $X_3$, and $X_2$. The noise covariates $X_5$ and $X_6$ do not appear in the tree. The splitting constants are $c_4=1/7$, $c_{1,1}=\Phi^{-1}(1/6)$, $c_{1,2}=\Phi^{-1}(7/12)$, and $c_3=F_{t_{10}}^{-1}(1/5)$.

\begin{figure}[!htbp]
    \centering
    \IfFileExists{DGS_tree_cont.pdf}{%
    \includegraphics[width=0.75\linewidth]{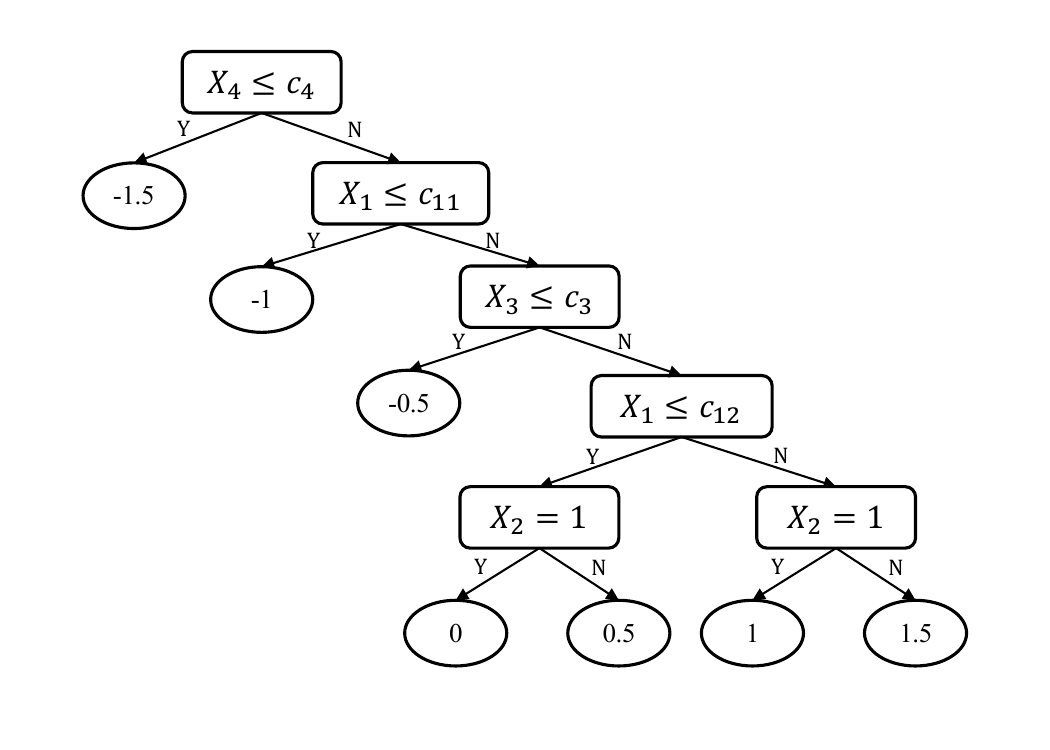}%
    }{%
    \fbox{\parbox{0.75\linewidth}{\centering Placeholder for \texttt{DGS\_tree\_cont.pdf}.}}%
    }
    \caption{Tree-structured data-generating process for $\mu_1(\mathbf{X})$ in Scenario~1. Each terminal node is labeled with its conditional mean value. The partition yields seven subspaces, each with probability $1/7$.}
    \label{fig:DGP}
\end{figure}

All simulation results below are averaged over independent Monte Carlo replicates with a fixed test-set size of $n_{\mathrm{test}}=1000$. Single-tree comparisons use 1000 replicates; forest comparisons use 250 replicates and 1000 trees per forest. For Scenarios~1.1 and~1.2, probability estimation targets $P(Y\leq c\mid \mathbf{X})$ with $c=0$ for Scenario~1.1 and $c=F^{-1}_{\chi^2_{15}}\{\Phi(0)\}\approx 14.339$ for Scenario~1.2. Bold values in the tables indicate the best value within each training size; for bias, coverage, and non-exceedance rates, ``best'' means closest to the target value.

All code was implemented in \textsf{R} 4.5.3 \citep{r2026} with use of the \texttt{rms} package for CPM-based approaches (version 8.1-1, \cite{rms}), and the \texttt{Matrix} package for matrix computations (version 1.7-4, \cite{Matrix}). Timing was performed on an Intel Core Ultra 7-265 processor with 32 GB RAM without parallelization.

\subsection{Full Single-Tree Results}
\label{sec:supp_tree}

Tables~\ref{tab:supp_tree_s11_quantile}--\ref{tab:supp_tree_s11_distribution} report full single-tree results for Scenario~1.1, and Tables~\ref{tab:supp_tree_s12_quantile}--\ref{tab:supp_tree_s12_distribution} report the corresponding results for Scenario~1.2. The compared methods are CART regression tree with SSE splitting (SSE), Wilcoxon regression tree, pinball-loss tree targeting $\tau\in\{0.1,0.5,0.9\}$, CPM tree with probit link, and CPM tree with logistic link. To avoid overly compressed tables, point-estimation and distributional metrics are shown in separate tables.

\begin{table}[!htbp]
\centering
\begingroup
\footnotesize
\setlength{\tabcolsep}{4pt}
\singlespacing
\caption{Full single-tree results under Scenario~1.1: conditional quantile RMSE, bias, and pinball loss.}
\label{tab:supp_tree_s11_quantile}
\begin{tabular}{lrrrrrrrrr}
\toprule
 & \multicolumn{3}{c}{$\tau=0.1$} & \multicolumn{3}{c}{$\tau=0.5$} & \multicolumn{3}{c}{$\tau=0.9$} \\
\cmidrule(lr){2-4}\cmidrule(lr){5-7}\cmidrule(lr){8-10}
Method & RMSE & Bias & Pinball & RMSE & Bias & Pinball & RMSE & Bias & Pinball \\
\midrule
\multicolumn{10}{l}{\textit{$n_{\mathrm{train}}=1000$, $n_{\mathrm{node}}=60$}} \\
SSE & 0.287 & 0.021 & 0.183 & 0.253 & -0.009 & 0.411 & 0.287 & -0.031 & 0.183 \\
Wilcoxon & 0.286 & 0.022 & 0.183 & 0.264 & -0.009 & 0.412 & 0.287 & -0.033 & 0.183 \\
Pinball & 0.368 & \textbf{-0.005} & 0.187 & 0.334 & -0.004 & 0.419 & 0.369 & \textbf{0.006} & 0.187 \\
CPM--probit & \textbf{0.250} & 0.026 & \textbf{0.181} & \textbf{0.244} & 0.003 & \textbf{0.410} & \textbf{0.250} & -0.018 & \textbf{0.181} \\
CPM--logistic & 0.256 & 0.046 & \textbf{0.181} & 0.251 & \textbf{0.002} & 0.411 & 0.254 & -0.037 & 0.182 \\
\addlinespace[3pt]
\multicolumn{10}{l}{\textit{$n_{\mathrm{train}}=2000$, $n_{\mathrm{node}}=120$}} \\
SSE & 0.202 & \textbf{0.009} & 0.179 & 0.179 & -0.004 & \textbf{0.405} & 0.202 & -0.013 & 0.179 \\
Wilcoxon & 0.200 & 0.010 & 0.179 & 0.186 & -0.004 & 0.406 & 0.199 & -0.015 & 0.179 \\
Pinball & 0.314 & -0.015 & 0.184 & 0.293 & \textbf{0.001} & 0.414 & 0.317 & 0.021 & 0.184 \\
CPM--probit & \textbf{0.174} & 0.014 & \textbf{0.178} & \textbf{0.170} & \textbf{0.001} & \textbf{0.405} & \textbf{0.173} & \textbf{-0.010} & \textbf{0.178} \\
CPM--logistic & 0.180 & 0.033 & \textbf{0.178} & 0.175 & \textbf{0.001} & \textbf{0.405} & 0.178 & -0.026 & 0.179 \\
\addlinespace[3pt]
\multicolumn{10}{l}{\textit{$n_{\mathrm{train}}=5000$, $n_{\mathrm{node}}=300$}} \\
SSE & 0.127 & \textbf{0.003} & \textbf{0.177} & 0.112 & -0.002 & 0.402 & 0.126 & -0.007 & \textbf{0.177} \\
Wilcoxon & 0.124 & \textbf{0.003} & \textbf{0.177} & 0.116 & -0.002 & 0.402 & 0.123 & -0.008 & \textbf{0.177} \\
Pinball & 0.295 & -0.024 & 0.183 & 0.283 & 0.003 & 0.413 & 0.299 & 0.036 & 0.183 \\
CPM--probit & \textbf{0.108} & 0.006 & \textbf{0.177} & \textbf{0.105} & \textbf{0.000} & \textbf{0.401} & \textbf{0.108} & \textbf{-0.005} & \textbf{0.177} \\
CPM--logistic & 0.114 & 0.022 & \textbf{0.177} & 0.109 & -0.001 & 0.402 & 0.113 & -0.020 & \textbf{0.177} \\
\bottomrule
\end{tabular}
\endgroup
\end{table}

\begin{table}[!htbp]
\centering
\begingroup
\footnotesize
\setlength{\tabcolsep}{5pt}
\singlespacing
\caption{Full single-tree results under Scenario~1.1: conditional mean and threshold-probability accuracy. The threshold is $c=0$.}
\label{tab:supp_tree_s11_meanprob}
\begin{tabular}{lrrrrr}
\toprule
 & \multicolumn{2}{c}{Mean} & \multicolumn{3}{c}{Threshold probability} \\
\cmidrule(lr){2-3}\cmidrule(lr){4-6}
Method & RMSE & Bias & RMSE & Bias & Brier \\
\midrule
\multicolumn{6}{l}{\textit{$n_{\mathrm{train}}=1000$, $n_{\mathrm{node}}=60$}} \\
SSE & \textbf{0.238} & \textbf{0.000} & 0.078 & -0.001 & 0.157 \\
Wilcoxon & 0.242 & \textbf{0.000} & 0.080 & \textbf{0.000} & 0.157 \\
Pinball & 0.315 & 0.005 & 0.101 & -0.002 & 0.161 \\
CPM--probit & 0.242 & 0.003 & \textbf{0.073} & -0.001 & \textbf{0.156} \\
CPM--logistic & 0.245 & 0.003 & 0.075 & -0.001 & \textbf{0.156} \\
\addlinespace[3pt]
\multicolumn{6}{l}{\textit{$n_{\mathrm{train}}=2000$, $n_{\mathrm{node}}=120$}} \\
SSE & \textbf{0.167} & \textbf{0.000} & 0.054 & \textbf{0.000} & \textbf{0.153} \\
Wilcoxon & 0.168 & \textbf{0.000} & 0.055 & \textbf{0.000} & 0.154 \\
Pinball & 0.282 & 0.006 & 0.089 & -0.002 & 0.158 \\
CPM--probit & 0.168 & 0.002 & \textbf{0.051} & \textbf{0.000} & \textbf{0.153} \\
CPM--logistic & 0.171 & 0.003 & 0.052 & \textbf{0.000} & \textbf{0.153} \\
\addlinespace[3pt]
\multicolumn{6}{l}{\textit{$n_{\mathrm{train}}=5000$, $n_{\mathrm{node}}=300$}} \\
SSE & \textbf{0.104} & -0.001 & 0.034 & \textbf{0.000} & \textbf{0.152} \\
Wilcoxon & \textbf{0.104} & -0.001 & 0.034 & \textbf{0.000} & \textbf{0.152} \\
Pinball & 0.278 & 0.007 & 0.086 & -0.003 & 0.158 \\
CPM--probit & \textbf{0.104} & \textbf{0.000} & \textbf{0.031} & \textbf{0.000} & \textbf{0.152} \\
CPM--logistic & 0.107 & 0.001 & 0.033 & \textbf{0.000} & \textbf{0.152} \\
\bottomrule
\end{tabular}
\endgroup
\end{table}

\begin{table}[!htbp]
\centering
\begingroup
\footnotesize
\setlength{\tabcolsep}{4pt}
\singlespacing
\caption{Full single-tree results under Scenario~1.1: distributional prediction metrics and computation time.}
\label{tab:supp_tree_s11_distribution}
\begin{tabular}{lrrrrrrrr}
\toprule
 & \multicolumn{2}{c}{CRPS} & & \multicolumn{2}{c}{80\% PI} & \multicolumn{2}{c}{Non-exceedance} & \\
\cmidrule(lr){2-3}\cmidrule(lr){5-6}\cmidrule(lr){7-8}
Method & Mean & Median & CvM & Coverage & Med.\ width & $\hat p_{0.1}$ & $\hat p_{0.9}$ & Time (s) \\
\midrule
\multicolumn{9}{l}{\textit{$n_{\mathrm{train}}=1000$, $n_{\mathrm{node}}=60$}} \\
SSE & 0.584 & 0.422 & 0.435 & 0.774 & 2.504 & 0.112 & 0.886 & \textbf{0.004} \\
Wilcoxon & 0.585 & 0.422 & 0.446 & 0.773 & 2.501 & 0.112 & 0.886 & 0.005 \\
Pinball & 0.677 & 0.493 & 0.377 & 0.777 & 2.558 & \textbf{0.111} & 0.888 & 0.022 \\
CPM--probit & \textbf{0.579} & \textbf{0.416} & \textbf{0.361} & \textbf{0.780} & 2.518 & \textbf{0.111} & \textbf{0.891} & 0.111 \\
CPM--logistic & 0.580 & 0.417 & 0.534 & 0.772 & \textbf{2.489} & 0.115 & 0.887 & 0.108 \\
\addlinespace[3pt]
\multicolumn{9}{l}{\textit{$n_{\mathrm{train}}=2000$, $n_{\mathrm{node}}=120$}} \\
SSE & 0.574 & 0.415 & 0.274 & 0.787 & 2.538 & \textbf{0.106} & 0.893 & \textbf{0.004} \\
Wilcoxon & 0.574 & 0.415 & 0.278 & 0.787 & 2.536 & \textbf{0.106} & 0.893 & 0.008 \\
Pinball & 0.671 & 0.489 & 0.274 & 0.787 & 2.591 & 0.107 & 0.894 & 0.034 \\
CPM--probit & \textbf{0.572} & \textbf{0.412} & \textbf{0.241} & \textbf{0.790} & 2.538 & \textbf{0.106} & \textbf{0.895} & 0.667 \\
CPM--logistic & 0.573 & 0.413 & 0.359 & 0.783 & \textbf{2.515} & 0.109 & 0.892 & 0.753 \\
\addlinespace[3pt]
\multicolumn{9}{l}{\textit{$n_{\mathrm{train}}=5000$, $n_{\mathrm{node}}=300$}} \\
SSE & 0.568 & 0.412 & 0.205 & 0.795 & 2.553 & \textbf{0.102} & 0.897 & \textbf{0.008} \\
Wilcoxon & 0.568 & 0.412 & 0.206 & 0.795 & 2.553 & \textbf{0.102} & 0.897 & 0.016 \\
Pinball & 0.665 & 0.484 & 0.222 & 0.795 & 2.614 & 0.104 & \textbf{0.899} & 0.063 \\
CPM--probit & \textbf{0.567} & \textbf{0.411} & \textbf{0.196} & \textbf{0.796} & 2.552 & \textbf{0.102} & 0.898 & 0.164 \\
CPM--logistic & 0.568 & 0.412 & 0.279 & 0.790 & \textbf{2.531} & 0.105 & 0.895 & 0.189 \\
\bottomrule
\end{tabular}
\endgroup
\end{table}

\begin{table}[!htbp]
\centering
\begingroup
\footnotesize
\setlength{\tabcolsep}{4pt}
\singlespacing
\caption{Full single-tree results under Scenario~1.2: conditional quantile RMSE, bias, and pinball loss.}
\label{tab:supp_tree_s12_quantile}
\begin{tabular}{lrrrrrrrrr}
\toprule
 & \multicolumn{3}{c}{$\tau=0.1$} & \multicolumn{3}{c}{$\tau=0.5$} & \multicolumn{3}{c}{$\tau=0.9$} \\
\cmidrule(lr){2-4}\cmidrule(lr){5-7}\cmidrule(lr){8-10}
Method & RMSE & Bias & Pinball & RMSE & Bias & Pinball & RMSE & Bias & Pinball \\
\midrule
\multicolumn{10}{l}{\textit{$n_{\mathrm{train}}=1000$, $n_{\mathrm{node}}=60$}} \\
SSE & 1.283 & 0.052 & 0.834 & 1.635 & -0.036 & 2.227 & 2.477 & -0.095 & 1.170 \\
Wilcoxon & 1.141 & 0.116 & 0.829 & 1.453 & -0.022 & 2.208 & 2.072 & -0.193 & 1.146 \\
Pinball & 1.484 & \textbf{-0.050} & 0.847 & 1.934 & \textbf{-0.021} & 2.253 & 2.872 & \textbf{0.079} & 1.186 \\
CPM--probit & \textbf{0.995} & 0.120 & \textbf{0.823} & \textbf{1.350} & 0.036 & \textbf{2.198} & \textbf{1.830} & -0.098 & \textbf{1.134} \\
CPM--logistic & 1.027 & 0.178 & 0.824 & 1.388 & 0.044 & 2.201 & 1.843 & -0.251 & 1.136 \\
\addlinespace[3pt]
\multicolumn{10}{l}{\textit{$n_{\mathrm{train}}=2000$, $n_{\mathrm{node}}=120$}} \\
SSE & 0.882 & \textbf{0.022} & 0.817 & 1.138 & -0.017 & 2.185 & 1.722 & \textbf{-0.035} & 1.133 \\
Wilcoxon & 0.788 & 0.052 & 0.815 & 1.024 & \textbf{-0.006} & 2.175 & 1.444 & -0.085 & 1.120 \\
Pinball & 1.267 & -0.084 & 0.834 & 1.676 & -0.013 & 2.228 & 2.422 & 0.151 & 1.161 \\
CPM--probit & \textbf{0.684} & 0.063 & \textbf{0.811} & \textbf{0.938} & 0.017 & \textbf{2.169} & \textbf{1.266} & -0.057 & \textbf{1.113} \\
CPM--logistic & 0.710 & 0.113 & 0.812 & 0.968 & 0.027 & 2.171 & 1.294 & -0.191 & 1.114 \\
\addlinespace[3pt]
\multicolumn{10}{l}{\textit{$n_{\mathrm{train}}=5000$, $n_{\mathrm{node}}=300$}} \\
SSE & 0.517 & \textbf{0.011} & 0.807 & 0.668 & -0.007 & 2.155 & 1.022 & \textbf{-0.028} & 1.107 \\
Wilcoxon & 0.485 & 0.020 & 0.806 & 0.638 & -0.006 & 2.154 & 0.892 & -0.044 & 1.103 \\
Pinball & 1.198 & -0.127 & 0.830 & 1.607 & -0.008 & 2.217 & 2.224 & 0.222 & 1.148 \\
CPM--probit & \textbf{0.420} & 0.025 & \textbf{0.805} & \textbf{0.579} & \textbf{0.004} & \textbf{2.151} & \textbf{0.786} & -0.030 & \textbf{1.101} \\
CPM--logistic & 0.441 & 0.069 & 0.806 & 0.601 & 0.011 & 2.152 & 0.831 & -0.157 & \textbf{1.101} \\
\bottomrule
\end{tabular}
\endgroup
\end{table}

\begin{table}[!htbp]
\centering
\begingroup
\footnotesize
\setlength{\tabcolsep}{5pt}
\singlespacing
\caption{Full single-tree results under Scenario~1.2: conditional mean and threshold-probability accuracy. The threshold is $c\approx 14.339$.}
\label{tab:supp_tree_s12_meanprob}
\begin{tabular}{lrrrrr}
\toprule
 & \multicolumn{2}{c}{Mean} & \multicolumn{3}{c}{Threshold probability} \\
\cmidrule(lr){2-3}\cmidrule(lr){4-6}
Method & RMSE & Bias & RMSE & Bias & Brier \\
\midrule
\multicolumn{6}{l}{\textit{$n_{\mathrm{train}}=1000$, $n_{\mathrm{node}}=60$}} \\
SSE & 1.586 & \textbf{0.000} & 0.092 & \textbf{0.000} & 0.159 \\
Wilcoxon & 1.354 & -0.002 & 0.080 & \textbf{0.000} & 0.157 \\
Pinball & 1.873 & 0.028 & 0.105 & -0.002 & 0.162 \\
CPM--probit & \textbf{1.351} & 0.009 & \textbf{0.073} & -0.001 & \textbf{0.156} \\
CPM--logistic & 1.368 & 0.025 & 0.075 & -0.001 & \textbf{0.156} \\
\addlinespace[3pt]
\multicolumn{6}{l}{\textit{$n_{\mathrm{train}}=2000$, $n_{\mathrm{node}}=120$}} \\
SSE & 1.097 & -0.003 & 0.063 & \textbf{0.000} & 0.155 \\
Wilcoxon & 0.940 & -0.004 & 0.055 & \textbf{0.000} & 0.154 \\
Pinball & 1.641 & 0.026 & 0.090 & -0.002 & 0.159 \\
CPM--probit & \textbf{0.938} & \textbf{0.002} & \textbf{0.051} & \textbf{0.000} & \textbf{0.153} \\
CPM--logistic & 0.953 & 0.022 & 0.052 & \textbf{0.000} & \textbf{0.153} \\
\addlinespace[3pt]
\multicolumn{6}{l}{\textit{$n_{\mathrm{train}}=5000$, $n_{\mathrm{node}}=300$}} \\
SSE & 0.642 & -0.005 & 0.037 & \textbf{0.000} & \textbf{0.152} \\
Wilcoxon & 0.581 & -0.007 & 0.034 & \textbf{0.000} & \textbf{0.152} \\
Pinball & 1.599 & 0.034 & 0.084 & -0.003 & 0.158 \\
CPM--probit & \textbf{0.580} & \textbf{-0.004} & \textbf{0.031} & \textbf{0.000} & \textbf{0.152} \\
CPM--logistic & 0.594 & 0.015 & 0.033 & \textbf{0.000} & \textbf{0.152} \\
\bottomrule
\end{tabular}
\endgroup
\end{table}

\begin{table}[!htbp]
\centering
\begingroup
\footnotesize
\setlength{\tabcolsep}{4pt}
\singlespacing
\caption{Full single-tree results under Scenario~1.2: distributional prediction metrics and computation time.}
\label{tab:supp_tree_s12_distribution}
\begin{tabular}{lrrrrrrrr}
\toprule
 & \multicolumn{2}{c}{CRPS} & & \multicolumn{2}{c}{80\% PI} & \multicolumn{2}{c}{Non-exceedance} & \\
\cmidrule(lr){2-3}\cmidrule(lr){5-6}\cmidrule(lr){7-8}
Method & Mean & Median & CvM & Coverage & Med.\ width & $\hat p_{0.1}$ & $\hat p_{0.9}$ & Time (s) \\
\midrule
\multicolumn{9}{l}{\textit{$n_{\mathrm{train}}=1000$, $n_{\mathrm{node}}=60$}} \\
SSE & 3.172 & 2.230 & 0.432 & 0.773 & 13.154 & 0.111 & 0.885 & \textbf{0.003} \\
Wilcoxon & 3.140 & 2.207 & 0.446 & 0.773 & \textbf{13.029} & 0.112 & 0.886 & 0.005 \\
Pinball & 3.625 & 2.589 & 0.383 & 0.777 & 13.358 & \textbf{0.110} & 0.887 & 0.022 \\
CPM--probit & \textbf{3.106} & 2.165 & \textbf{0.361} & \textbf{0.780} & 13.121 & 0.111 & \textbf{0.891} & 0.108 \\
CPM--logistic & 3.112 & \textbf{2.163} & 0.534 & 0.772 & 13.089 & 0.115 & 0.887 & 0.108 \\
\addlinespace[3pt]
\multicolumn{9}{l}{\textit{$n_{\mathrm{train}}=2000$, $n_{\mathrm{node}}=120$}} \\
SSE & 3.104 & 2.184 & 0.271 & 0.787 & 13.253 & \textbf{0.106} & 0.893 & \textbf{0.005} \\
Wilcoxon & 3.086 & 2.172 & 0.278 & 0.787 & \textbf{13.181} & \textbf{0.106} & 0.893 & 0.008 \\
Pinball & 3.600 & 2.561 & 0.272 & 0.787 & 13.579 & \textbf{0.106} & 0.893 & 0.033 \\
CPM--probit & \textbf{3.069} & 2.149 & \textbf{0.241} & \textbf{0.790} & 13.285 & \textbf{0.106} & \textbf{0.895} & 0.668 \\
CPM--logistic & 3.073 & \textbf{2.144} & 0.359 & 0.783 & 13.283 & 0.109 & 0.892 & 0.755 \\
\addlinespace[3pt]
\multicolumn{9}{l}{\textit{$n_{\mathrm{train}}=5000$, $n_{\mathrm{node}}=300$}} \\
SSE & 3.057 & 2.157 & 0.205 & 0.795 & \textbf{13.293} & \textbf{0.102} & 0.897 & \textbf{0.007} \\
Wilcoxon & 3.053 & 2.155 & 0.206 & 0.795 & 13.313 & \textbf{0.102} & 0.897 & 0.016 \\
Pinball & 3.570 & 2.537 & 0.223 & 0.795 & 13.907 & 0.103 & \textbf{0.898} & 0.062 \\
CPM--probit & \textbf{3.046} & 2.145 & \textbf{0.196} & \textbf{0.796} & 13.465 & \textbf{0.102} & \textbf{0.898} & 0.165 \\
CPM--logistic & 3.049 & \textbf{2.141} & 0.279 & 0.790 & 13.486 & 0.105 & 0.895 & 0.189 \\
\bottomrule
\end{tabular}
\endgroup
\end{table}

\clearpage
\subsection{Full Forest Results}
\label{sec:supp_forest}

Tables~\ref{tab:supp_forest_s11_quantile}--\ref{tab:supp_forest_s11_distribution} report full WRF and QRF results for Scenario~1.1, and Tables~\ref{tab:supp_forest_s12_quantile}--\ref{tab:supp_forest_s12_distribution} report the corresponding results for Scenario~1.2. Tables~\ref{tab:supp_forest_s2_quantile}--\ref{tab:supp_forest_s2_distribution} report results for Scenario~2, where no threshold-probability metric is included because the main target is the full conditional distribution under a smooth, nonlinear DGP.

\begin{table}[!htbp]
\centering
\begingroup
\footnotesize
\setlength{\tabcolsep}{4pt}
\singlespacing
\caption{Full forest results under Scenario~1.1: conditional quantile RMSE, bias, and pinball loss.}
\label{tab:supp_forest_s11_quantile}
\begin{tabular}{lrrrrrrrrr}
\toprule
 & \multicolumn{3}{c}{$\tau=0.1$} & \multicolumn{3}{c}{$\tau=0.5$} & \multicolumn{3}{c}{$\tau=0.9$} \\
\cmidrule(lr){2-4}\cmidrule(lr){5-7}\cmidrule(lr){8-10}
Method & RMSE & Bias & Pinball & RMSE & Bias & Pinball & RMSE & Bias & Pinball \\
\midrule
\multicolumn{10}{l}{\textit{$n_{\mathrm{train}}=1000$, $n_{\mathrm{node}}=10$}} \\
QRF & \textbf{0.351} & \textbf{-0.128} & \textbf{0.184} & \textbf{0.247} & \textbf{0.004} & \textbf{0.411} & \textbf{0.357} & \textbf{0.152} & \textbf{0.184} \\
WRF & 0.353 & -0.130 & \textbf{0.184} & 0.252 & 0.005 & \textbf{0.411} & 0.360 & 0.153 & \textbf{0.184} \\
\addlinespace[3pt]
\multicolumn{10}{l}{\textit{$n_{\mathrm{train}}=2000$, $n_{\mathrm{node}}=20$}} \\
QRF & \textbf{0.289} & \textbf{-0.116} & \textbf{0.181} & \textbf{0.192} & \textbf{0.005} & \textbf{0.406} & \textbf{0.295} & \textbf{0.140} & \textbf{0.181} \\
WRF & 0.290 & -0.117 & \textbf{0.181} & 0.195 & 0.006 & \textbf{0.406} & 0.297 & 0.141 & \textbf{0.181} \\
\addlinespace[3pt]
\multicolumn{10}{l}{\textit{$n_{\mathrm{train}}=5000$, $n_{\mathrm{node}}=50$}} \\
QRF & 0.240 & \textbf{-0.112} & \textbf{0.179} & \textbf{0.145} & \textbf{0.004} & \textbf{0.403} & \textbf{0.247} & \textbf{0.128} & \textbf{0.180} \\
WRF & \textbf{0.238} & -0.113 & \textbf{0.179} & 0.146 & 0.005 & \textbf{0.403} & 0.250 & 0.129 & \textbf{0.180} \\
\bottomrule
\end{tabular}
\endgroup
\end{table}

\begin{table}[!htbp]
\centering
\begingroup
\footnotesize
\setlength{\tabcolsep}{5pt}
\singlespacing
\caption{Full forest results under Scenario~1.1: conditional mean and threshold-probability accuracy. The threshold is $c=0$.}
\label{tab:supp_forest_s11_meanprob}
\begin{tabular}{lrrrrr}
\toprule
 & \multicolumn{2}{c}{Mean} & \multicolumn{3}{c}{Threshold probability} \\
\cmidrule(lr){2-3}\cmidrule(lr){4-6}
Method & RMSE & Bias & RMSE & Bias & Brier \\
\midrule
\multicolumn{6}{l}{\textit{$n_{\mathrm{train}}=1000$, $n_{\mathrm{node}}=10$}} \\
QRF & \textbf{0.258} & \textbf{0.008} & \textbf{0.084} & \textbf{-0.003} & \textbf{0.157} \\
WRF & 0.261 & \textbf{0.008} & 0.085 & \textbf{-0.003} & 0.158 \\
\addlinespace[3pt]
\multicolumn{6}{l}{\textit{$n_{\mathrm{train}}=2000$, $n_{\mathrm{node}}=20$}} \\
QRF & \textbf{0.211} & \textbf{0.008} & \textbf{0.067} & \textbf{-0.003} & \textbf{0.155} \\
WRF & 0.212 & 0.009 & 0.068 & \textbf{-0.003} & \textbf{0.155} \\
\addlinespace[3pt]
\multicolumn{6}{l}{\textit{$n_{\mathrm{train}}=5000$, $n_{\mathrm{node}}=50$}} \\
QRF & \textbf{0.172} & \textbf{0.006} & \textbf{0.054} & \textbf{-0.002} & \textbf{0.154} \\
WRF & \textbf{0.172} & 0.007 & 0.055 & \textbf{-0.002} & \textbf{0.154} \\
\bottomrule
\end{tabular}
\endgroup
\end{table}

\begin{table}[!htbp]
\centering
\begingroup
\footnotesize
\setlength{\tabcolsep}{4pt}
\singlespacing
\caption{Full forest results under Scenario~1.1: distributional prediction metrics and computation time.}
\label{tab:supp_forest_s11_distribution}
\begin{tabular}{lrrrrrrrr}
\toprule
 & \multicolumn{2}{c}{CRPS} & & \multicolumn{2}{c}{80\% PI} & \multicolumn{2}{c}{Non-exceedance} & \\
\cmidrule(lr){2-3}\cmidrule(lr){5-6}\cmidrule(lr){7-8}
Method & Mean & Median & CvM & Coverage & Med.\ width & $\hat p_{0.1}$ & $\hat p_{0.9}$ & Time (s) \\
\midrule
\multicolumn{9}{l}{\textit{$n_{\mathrm{train}}=1000$, $n_{\mathrm{node}}=10$}} \\
QRF & \textbf{0.584} & \textbf{0.432} & 0.384 & \textbf{0.825} & \textbf{2.819} & \textbf{0.090} & \textbf{0.914} & \textbf{2.925} \\
WRF & 0.585 & \textbf{0.432} & \textbf{0.380} & \textbf{0.825} & 2.820 & 0.089 & \textbf{0.914} & 3.990 \\
\addlinespace[3pt]
\multicolumn{9}{l}{\textit{$n_{\mathrm{train}}=2000$, $n_{\mathrm{node}}=20$}} \\
QRF & \textbf{0.577} & \textbf{0.424} & 0.344 & \textbf{0.828} & \textbf{2.802} & \textbf{0.088} & \textbf{0.915} & \textbf{3.422} \\
WRF & \textbf{0.577} & 0.425 & \textbf{0.343} & \textbf{0.828} & \textbf{2.802} & 0.087 & \textbf{0.915} & 4.998 \\
\addlinespace[3pt]
\multicolumn{9}{l}{\textit{$n_{\mathrm{train}}=5000$, $n_{\mathrm{node}}=50$}} \\
QRF & \textbf{0.572} & \textbf{0.422} & \textbf{0.320} & \textbf{0.831} & \textbf{2.787} & \textbf{0.085} & \textbf{0.916} & \textbf{4.899} \\
WRF & \textbf{0.572} & \textbf{0.422} & 0.322 & \textbf{0.831} & 2.788 & \textbf{0.085} & \textbf{0.916} & 7.954 \\
\bottomrule
\end{tabular}
\endgroup
\end{table}

\begin{table}[!htbp]
\centering
\begingroup
\footnotesize
\setlength{\tabcolsep}{4pt}
\singlespacing
\caption{Full forest results under Scenario~1.2: conditional quantile RMSE, bias, and pinball loss.}
\label{tab:supp_forest_s12_quantile}
\begin{tabular}{lrrrrrrrrr}
\toprule
 & \multicolumn{3}{c}{$\tau=0.1$} & \multicolumn{3}{c}{$\tau=0.5$} & \multicolumn{3}{c}{$\tau=0.9$} \\
\cmidrule(lr){2-4}\cmidrule(lr){5-7}\cmidrule(lr){8-10}
Method & RMSE & Bias & Pinball & RMSE & Bias & Pinball & RMSE & Bias & Pinball \\
\midrule
\multicolumn{10}{l}{\textit{$n_{\mathrm{train}}=1000$, $n_{\mathrm{node}}=10$}} \\
QRF & 1.556 & -0.693 & 0.838 & 1.413 & \textbf{-0.098} & 2.205 & 2.613 & 0.957 & 1.159 \\
WRF & \textbf{1.509} & \textbf{-0.643} & \textbf{0.836} & \textbf{1.401} & -0.103 & \textbf{2.203} & \textbf{2.406} & \textbf{0.885} & \textbf{1.149} \\
\addlinespace[3pt]
\multicolumn{10}{l}{\textit{$n_{\mathrm{train}}=2000$, $n_{\mathrm{node}}=20$}} \\
QRF & 1.339 & -0.642 & 0.829 & 1.111 & \textbf{-0.096} & 2.180 & 2.178 & 0.885 & 1.142 \\
WRF & \textbf{1.282} & \textbf{-0.590} & \textbf{0.826} & \textbf{1.091} & -0.097 & \textbf{2.178} & \textbf{1.940} & \textbf{0.806} & \textbf{1.132} \\
\addlinespace[3pt]
\multicolumn{10}{l}{\textit{$n_{\mathrm{train}}=5000$, $n_{\mathrm{node}}=50$}} \\
QRF & 1.168 & -0.624 & 0.821 & 0.845 & -0.097 & 2.164 & 1.864 & 0.822 & 1.129 \\
WRF & \textbf{1.102} & \textbf{-0.571} & \textbf{0.818} & \textbf{0.818} & \textbf{-0.096} & \textbf{2.161} & \textbf{1.572} & \textbf{0.729} & \textbf{1.119} \\
\bottomrule
\end{tabular}
\endgroup
\end{table}

\begin{table}[!htbp]
\centering
\begingroup
\footnotesize
\setlength{\tabcolsep}{5pt}
\singlespacing
\caption{Full forest results under Scenario~1.2: conditional mean and threshold-probability accuracy. The threshold is $c\approx 14.339$.}
\label{tab:supp_forest_s12_meanprob}
\begin{tabular}{lrrrrr}
\toprule
 & \multicolumn{2}{c}{Mean} & \multicolumn{3}{c}{Threshold probability} \\
\cmidrule(lr){2-3}\cmidrule(lr){4-6}
Method & RMSE & Bias & RMSE & Bias & Brier \\
\midrule
\multicolumn{6}{l}{\textit{$n_{\mathrm{train}}=1000$, $n_{\mathrm{node}}=10$}} \\
QRF & 1.504 & 0.042 & 0.089 & \textbf{-0.003} & \textbf{0.158} \\
WRF & \textbf{1.449} & \textbf{0.039} & \textbf{0.085} & \textbf{-0.003} & \textbf{0.158} \\
\addlinespace[3pt]
\multicolumn{6}{l}{\textit{$n_{\mathrm{train}}=2000$, $n_{\mathrm{node}}=20$}} \\
QRF & 1.247 & 0.038 & 0.073 & -0.004 & 0.156 \\
WRF & \textbf{1.178} & \textbf{0.036} & \textbf{0.068} & \textbf{-0.003} & \textbf{0.155} \\
\addlinespace[3pt]
\multicolumn{6}{l}{\textit{$n_{\mathrm{train}}=5000$, $n_{\mathrm{node}}=50$}} \\
QRF & 1.040 & 0.030 & 0.062 & -0.003 & 0.155 \\
WRF & \textbf{0.953} & \textbf{0.026} & \textbf{0.055} & \textbf{-0.002} & \textbf{0.154} \\
\bottomrule
\end{tabular}
\endgroup
\end{table}

\begin{table}[!htbp]
\centering
\begingroup
\footnotesize
\setlength{\tabcolsep}{4pt}
\singlespacing
\caption{Full forest results under Scenario~1.2: distributional prediction metrics and computation time.}
\label{tab:supp_forest_s12_distribution}
\begin{tabular}{lrrrrrrrr}
\toprule
 & \multicolumn{2}{c}{CRPS} & & \multicolumn{2}{c}{80\% PI} & \multicolumn{2}{c}{Non-exceedance} & \\
\cmidrule(lr){2-3}\cmidrule(lr){5-6}\cmidrule(lr){7-8}
Method & Mean & Median & CvM & Coverage & Med.\ width & $\hat p_{0.1}$ & $\hat p_{0.9}$ & Time (s) \\
\midrule
\multicolumn{9}{l}{\textit{$n_{\mathrm{train}}=1000$, $n_{\mathrm{node}}=10$}} \\
QRF & 3.145 & 2.277 & 0.396 & 0.826 & 14.960 & 0.088 & \textbf{0.914} & \textbf{2.927} \\
WRF & \textbf{3.138} & \textbf{2.269} & \textbf{0.380} & \textbf{0.825} & \textbf{14.916} & \textbf{0.089} & \textbf{0.914} & 3.985 \\
\addlinespace[3pt]
\multicolumn{9}{l}{\textit{$n_{\mathrm{train}}=2000$, $n_{\mathrm{node}}=20$}} \\
QRF & 3.107 & 2.239 & 0.363 & 0.829 & 14.794 & 0.086 & 0.916 & \textbf{3.427} \\
WRF & \textbf{3.100} & \textbf{2.232} & \textbf{0.343} & \textbf{0.828} & \textbf{14.727} & \textbf{0.087} & \textbf{0.915} & 5.025 \\
\addlinespace[3pt]
\multicolumn{9}{l}{\textit{$n_{\mathrm{train}}=5000$, $n_{\mathrm{node}}=50$}} \\
QRF & 3.082 & 2.225 & 0.341 & 0.833 & 14.593 & 0.084 & 0.917 & \textbf{4.837} \\
WRF & \textbf{3.074} & \textbf{2.216} & \textbf{0.322} & \textbf{0.831} & \textbf{14.547} & \textbf{0.085} & \textbf{0.916} & 7.972 \\
\bottomrule
\end{tabular}
\endgroup
\end{table}

\begin{table}[!htbp]
\centering
\begingroup
\footnotesize
\setlength{\tabcolsep}{4pt}
\singlespacing
\caption{Full forest results under Scenario~2: conditional quantile RMSE, bias, and pinball loss.}
\label{tab:supp_forest_s2_quantile}
\begin{tabular}{lrrrrrrrrr}
\toprule
 & \multicolumn{3}{c}{$\tau=0.1$} & \multicolumn{3}{c}{$\tau=0.5$} & \multicolumn{3}{c}{$\tau=0.9$} \\
\cmidrule(lr){2-4}\cmidrule(lr){5-7}\cmidrule(lr){8-10}
Method & RMSE & Bias & Pinball & RMSE & Bias & Pinball & RMSE & Bias & Pinball \\
\midrule
\multicolumn{10}{l}{\textit{$n_{\mathrm{train}}=1000$, $n_{\mathrm{node}}=5$}} \\
QRF & 0.702 & -0.279 & 0.256 & 0.519 & 0.181 & 0.737 & 1.024 & 0.019 & 0.463 \\
WRF & \textbf{0.673} & \textbf{-0.244} & \textbf{0.254} & \textbf{0.509} & \textbf{0.165} & \textbf{0.736} & \textbf{1.011} & \textbf{0.009} & \textbf{0.462} \\
\addlinespace[3pt]
\multicolumn{10}{l}{\textit{$n_{\mathrm{train}}=2000$, $n_{\mathrm{node}}=10$}} \\
QRF & 0.649 & -0.271 & 0.251 & 0.453 & 0.167 & 0.730 & \textbf{0.859} & 0.016 & \textbf{0.455} \\
WRF & \textbf{0.617} & \textbf{-0.237} & \textbf{0.248} & \textbf{0.437} & \textbf{0.150} & \textbf{0.728} & 0.860 & \textbf{0.010} & 0.456 \\
\addlinespace[3pt]
\multicolumn{10}{l}{\textit{$n_{\mathrm{train}}=5000$, $n_{\mathrm{node}}=25$}} \\
QRF & 0.607 & -0.263 & 0.247 & 0.404 & 0.153 & 0.725 & \textbf{0.708} & 0.021 & \textbf{0.449} \\
WRF & \textbf{0.572} & \textbf{-0.231} & \textbf{0.245} & \textbf{0.384} & \textbf{0.136} & \textbf{0.723} & 0.732 & \textbf{0.017} & 0.450 \\
\bottomrule
\end{tabular}
\endgroup
\end{table}

\begin{table}[!htbp]
\centering
\begingroup
\footnotesize
\setlength{\tabcolsep}{5pt}
\singlespacing
\caption{Full forest results under Scenario~2: conditional mean RMSE and bias.}
\label{tab:supp_forest_s2_mean}
\begin{tabular}{lrr}
\toprule
Method & Mean RMSE & Mean bias \\
\midrule
\multicolumn{3}{l}{\textit{$n_{\mathrm{train}}=1000$, $n_{\mathrm{node}}=5$}} \\
QRF & 0.460 & 0.009 \\
WRF & \textbf{0.450} & \textbf{0.008} \\
\addlinespace[3pt]
\multicolumn{3}{l}{\textit{$n_{\mathrm{train}}=2000$, $n_{\mathrm{node}}=10$}} \\
QRF & 0.386 & \textbf{0.006} \\
WRF & \textbf{0.375} & \textbf{0.006} \\
\addlinespace[3pt]
\multicolumn{3}{l}{\textit{$n_{\mathrm{train}}=5000$, $n_{\mathrm{node}}=25$}} \\
QRF & 0.329 & \textbf{0.008} \\
WRF & \textbf{0.322} & \textbf{0.008} \\
\bottomrule
\end{tabular}
\endgroup
\end{table}

\begin{table}[!htbp]
\centering
\begingroup
\footnotesize
\setlength{\tabcolsep}{4pt}
\singlespacing
\caption{Full forest results under Scenario~2: distributional prediction metrics and computation time.}
\label{tab:supp_forest_s2_distribution}
\begin{tabular}{lrrrrrrrr}
\toprule
 & \multicolumn{2}{c}{CRPS} & & \multicolumn{2}{c}{80\% PI} & \multicolumn{2}{c}{Non-exceedance} & \\
\cmidrule(lr){2-3}\cmidrule(lr){5-6}\cmidrule(lr){7-8}
Method & Mean & Median & CvM & Coverage & Med.\ width & $\hat p_{0.1}$ & $\hat p_{0.9}$ & Time (s) \\
\midrule
\multicolumn{9}{l}{\textit{$n_{\mathrm{train}}=1000$, $n_{\mathrm{node}}=5$}} \\
QRF & 1.075 & 0.688 & 0.488 & 0.821 & 4.714 & 0.077 & \textbf{0.898} & \textbf{6.127} \\
WRF & \textbf{1.073} & \textbf{0.684} & \textbf{0.443} & \textbf{0.818} & \textbf{4.677} & \textbf{0.080} & \textbf{0.898} & 7.848 \\
\addlinespace[3pt]
\multicolumn{9}{l}{\textit{$n_{\mathrm{train}}=2000$, $n_{\mathrm{node}}=10$}} \\
QRF & 1.062 & 0.676 & 0.467 & 0.828 & 4.750 & 0.074 & \textbf{0.901} & \textbf{6.668} \\
WRF & \textbf{1.060} & \textbf{0.671} & \textbf{0.419} & \textbf{0.825} & \textbf{4.706} & \textbf{0.076} & \textbf{0.901} & 8.625 \\
\addlinespace[3pt]
\multicolumn{9}{l}{\textit{$n_{\mathrm{train}}=5000$, $n_{\mathrm{node}}=25$}} \\
QRF & 1.053 & 0.669 & 0.448 & 0.833 & 4.785 & 0.071 & \textbf{0.904} & \textbf{7.885} \\
WRF & \textbf{1.051} & \textbf{0.664} & \textbf{0.402} & \textbf{0.830} & \textbf{4.740} & \textbf{0.074} & \textbf{0.904} & 11.385 \\
\bottomrule
\end{tabular}
\endgroup
\end{table}

\subsection{Exploratory CPM-Refined Forest Analysis}
To assess whether the improvement from CPM refinement in single-tree predictions remained effective after forest aggregation, we compared the ECDF-based WRF with a CPM-refined WRF under Scenarios 1.1 and 1.2. This exploratory comparison used the first 100 Monte Carlo replicates from the Scenario~1 forest experiment, with $n_{\mathrm{train}}=2000$, $n_{\mathrm{test}}=1000$, 1000 trees, $\textit{mtry}=2$, a subsampling fraction of 0.632, and $n_{\mathrm{node}}=20$. Within each replicate, the two methods used identical Wilcoxon tree partitions and forest randomization. The ECDF-based WRF used the terminal-node ECDF for each tree, whereas the CPM-refined WRF fitted a probit-link CPM with terminal-node indicators and used its fitted conditional CDF as the tree-level prediction. The resulting tree-level CDFs were averaged to form the forest prediction. Both methods were evaluated on the same test observations using the metrics described in the main simulation study, so the comparison was paired by Monte Carlo replicate.

Tables~\ref{tab:supp_cpmforest_quantile}--\ref{tab:supp_cpmforest_distribution} present the results. The WRF results reported here are based on the same first 100 replicates from the forest simulation described in Supplementary Section~\ref{sec:supp_forest}. In both scenarios, the CPM-refined WRF reduced RMSE for the 0.1 and 0.9 conditional quantiles and produced slightly smaller mean and median CRPS. However, it had larger RMSE for the conditional median and mean, slightly worse threshold-probability RMSE and Brier scores, and a larger CvM statistic. The CRPS reductions were less than 1\%, whereas mean computation time increased from approximately 5 seconds to 282--283 seconds, about 57-fold. Thus, CPM refinement produced mixed changes in predictive accuracy and no consistent overall improvement sufficient to offset its additional computation.

\begin{table}[!htbp]
\centering
\begingroup
\footnotesize
\setlength{\tabcolsep}{4pt}
\singlespacing
\caption{Exploratory comparison of the ECDF-based and CPM-refined WRFs: conditional quantile RMSE, bias, and pinball loss. Results are averages over the first 100 Scenario~1 forest replicates with $n_{\mathrm{train}}=2000$ and $n_{\mathrm{node}}=20$.}
\label{tab:supp_cpmforest_quantile}
\begin{tabular}{lrrrrrrrrr}
\toprule
 & \multicolumn{3}{c}{$\tau=0.1$} & \multicolumn{3}{c}{$\tau=0.5$} & \multicolumn{3}{c}{$\tau=0.9$} \\
\cmidrule(lr){2-4}\cmidrule(lr){5-7}\cmidrule(lr){8-10}
Method & RMSE & Bias & Pinball & RMSE & Bias & Pinball & RMSE & Bias & Pinball \\
\midrule
\multicolumn{10}{l}{\textit{Scenario~1.1}} \\
ECDF-based WRF & 0.293 & \textbf{-0.118} & 0.182 & \textbf{0.195} & 0.004 & \textbf{0.407} & 0.296 & \textbf{0.140} & \textbf{0.181} \\
CPM-refined WRF & \textbf{0.266} & -0.124 & \textbf{0.181} & 0.200 & \textbf{0.002} & \textbf{0.407} & \textbf{0.289} & 0.151 & \textbf{0.181} \\
\addlinespace[3pt]
\multicolumn{10}{l}{\textit{Scenario~1.2}} \\
ECDF-based WRF & 1.292 & -0.597 & 0.829 & \textbf{1.096} & \textbf{-0.103} & \textbf{2.180} & 1.942 & \textbf{0.802} & 1.132 \\
CPM-refined WRF & \textbf{1.182} & \textbf{-0.596} & \textbf{0.825} & 1.134 & -0.129 & 2.181 & \textbf{1.868} & 0.865 & \textbf{1.130} \\
\bottomrule
\end{tabular}
\endgroup
\end{table}

\begin{table}[!htbp]
\centering
\begingroup
\footnotesize
\setlength{\tabcolsep}{5pt}
\singlespacing
\caption{Exploratory comparison of the ECDF-based and CPM-refined WRFs: conditional mean and threshold-probability accuracy. The threshold is $c=0$ in Scenario~1.1 and $c\approx 14.339$ in Scenario~1.2.}
\label{tab:supp_cpmforest_meanprob}
\begin{tabular}{lrrrrr}
\toprule
 & \multicolumn{2}{c}{Mean} & \multicolumn{3}{c}{Threshold probability} \\
\cmidrule(lr){2-3}\cmidrule(lr){4-6}
Method & RMSE & Bias & RMSE & Bias & Brier \\
\midrule
\multicolumn{6}{l}{\textit{Scenario~1.1}} \\
ECDF-based WRF & \textbf{0.213} & \textbf{0.008} & \textbf{0.068} & \textbf{-0.003} & \textbf{0.155} \\
CPM-refined WRF & 0.217 & 0.009 & 0.070 & -0.004 & 0.156 \\
\addlinespace[3pt]
\multicolumn{6}{l}{\textit{Scenario~1.2}} \\
ECDF-based WRF & \textbf{1.184} & 0.032 & \textbf{0.068} & \textbf{-0.003} & \textbf{0.155} \\
CPM-refined WRF & 1.217 & \textbf{0.029} & 0.070 & -0.004 & 0.156 \\
\bottomrule
\end{tabular}
\endgroup
\end{table}

\begin{table}[!htbp]
\centering
\begingroup
\footnotesize
\setlength{\tabcolsep}{4pt}
\singlespacing
\caption{Exploratory comparison of the ECDF-based and CPM-refined WRFs: distributional prediction metrics and computation time.}
\label{tab:supp_cpmforest_distribution}
\begin{tabular}{lrrrrrrrr}
\toprule
 & \multicolumn{2}{c}{CRPS} & & \multicolumn{2}{c}{80\% PI} & \multicolumn{2}{c}{Non-exceedance} & \\
\cmidrule(lr){2-3}\cmidrule(lr){5-6}\cmidrule(lr){7-8}
Method & Mean & Median & CvM & Coverage & Med.\ width & $\hat p_{0.1}$ & $\hat p_{0.9}$ & Time (s) \\
\midrule
\multicolumn{9}{l}{\textit{Scenario~1.1}} \\
ECDF-based WRF & 0.578 & 0.425 & \textbf{0.318} & \textbf{0.828} & \textbf{2.804} & \textbf{0.088} & \textbf{0.916} & \textbf{4.959} \\
CPM-refined WRF & \textbf{0.577} & \textbf{0.424} & 0.363 & 0.833 & 2.808 & 0.085 & 0.917 & 282.100 \\
\addlinespace[3pt]
\multicolumn{9}{l}{\textit{Scenario~1.2}} \\
ECDF-based WRF & 3.102 & 2.230 & \textbf{0.318} & \textbf{0.828} & \textbf{14.740} & \textbf{0.088} & \textbf{0.916} & \textbf{4.987} \\
CPM-refined WRF & \textbf{3.096} & \textbf{2.213} & 0.363 & 0.833 & 14.847 & 0.085 & 0.917 & 283.109 \\
\bottomrule
\end{tabular}
\endgroup
\end{table}

\clearpage
\section{Additional Application Details and Results}
\label{sec:supp_application}

\subsection{Baseline Characteristics}
\label{sec:supp_baseline}

Table~\ref{tab:supp_baseline} summarizes baseline characteristics for the $N=11,070$ participants included in the HIV application.

\begin{table}[!htbp]
\centering
\begingroup
\footnotesize
\singlespacing
\caption{Baseline characteristics of participants with valid 6-month viral load and CD4 measurements after ART initiation.}
\label{tab:supp_baseline}
\begin{tabular}{l r}
\toprule
 & \textbf{Overall ($N=11,070$)} \\
\midrule
\multicolumn{2}{l}{\textbf{Age at ART initiation}} \\
\hspace*{1em}Median [IQR] & 31.9 [26.2, 40.2] \\
\multicolumn{2}{l}{\textbf{Sex at birth}} \\
\hspace*{1em}Male & 9272 (83.8\%) \\
\hspace*{1em}Female & 1798 (16.2\%) \\
\multicolumn{2}{l}{\textbf{Site}} \\
\hspace*{1em}Argentina & 374 (3.4\%) \\
\hspace*{1em}Brazil & 2844 (25.7\%) \\
\hspace*{1em}Chile & 3406 (30.8\%) \\
\hspace*{1em}Honduras & 112 (1.0\%) \\
\hspace*{1em}Mexico & 1038 (9.4\%) \\
\hspace*{1em}Peru & 3296 (29.8\%) \\
\multicolumn{2}{l}{\textbf{Mode of HIV infection}} \\
\hspace*{1em}Heterosexual contact & 3466 (31.3\%) \\
\hspace*{1em}Homosexual contact & 6956 (62.8\%) \\
\hspace*{1em}Other & 118 (1.1\%) \\
\hspace*{1em}Unknown & 530 (4.8\%) \\
\multicolumn{2}{l}{\textbf{Previous AIDS event}} \\
\hspace*{1em}Yes & 2450 (22.1\%) \\
\hspace*{1em}No & 8620 (77.9\%) \\
\multicolumn{2}{l}{\textbf{Baseline CD4 cell count}} \\
\hspace*{1em}Median [IQR] & 279 [119, 447] \\
\multicolumn{2}{l}{\textbf{Baseline viral load}} \\
\hspace*{1em}Median [IQR] & 82,618 [18,100, 303,698] \\
\multicolumn{2}{l}{\textbf{ART regimen}} \\
\hspace*{1em}INSTI-based & 3328 (30.1\%) \\
\hspace*{1em}NNRTI-based & 6528 (59.0\%) \\
\hspace*{1em}PI-based & 1172 (10.6\%) \\
\hspace*{1em}Other & 42 (0.4\%) \\
\multicolumn{2}{l}{\textbf{Months from ART initiation to biomarker measurement}} \\
\hspace*{1em}Median [IQR] & 5.75 [4.37, 6.87] \\
\multicolumn{2}{l}{\textbf{Calendar year of ART initiation}} \\
\hspace*{1em}Median [IQR] & 2016 [2014, 2018] \\
\bottomrule
\end{tabular}
\endgroup
\end{table}

\subsection{Tuning and Cross-Validated Performance}
\label{sec:supp_tuning}

For the 6-month CD4 analysis, Table~\ref{tab:supp_tune_cd4_crps} reports 10-fold cross-validated tuning results for QRF and WRF over minimum node sizes $n_{\mathrm{node}}\in\{5,10,20,50,100,200\}$. The best CRPS was attained at $n_{\mathrm{node}}=5$, which was used in the final CD4 analysis. 

For the viral load analysis, Table~\ref{tab:supp_tune_vl} reports 10-fold cross-validated Brier scores for $P(\mathrm{VL}<80\mid\mathbf{X})$ from WRF over $n_{\mathrm{node}}\in\{20,50,100,200,500,1000\}$. Larger candidate values were used because 83.1\% of participants had viral load below the common detection limit, creating extensive ties. The best Brier score was attained at $n_{\mathrm{node}}=20$.

\begin{table}[!htbp]
\centering
\begingroup
\small
\singlespacing
\caption{CD4 tuning: 10-fold cross-validated mean and median CRPS for QRF and WRF at each candidate minimum node size.}
\label{tab:supp_tune_cd4_crps}
\begin{tabular}{r cc cc}
\toprule
 & \multicolumn{2}{c}{QRF} & \multicolumn{2}{c}{WRF} \\
\cmidrule(lr){2-3}\cmidrule(lr){4-5}
$n_{\mathrm{node}}$ & Mean CRPS & Median CRPS & Mean CRPS & Median CRPS \\
\midrule
5   & \textbf{69.052} & \textbf{48.195} & \textbf{69.516} & \textbf{47.766} \\
10  & 77.177 & 54.089 & 77.419 & 53.556 \\
20  & 82.801 & 57.914 & 83.055 & 57.516 \\
50  & 88.885 & 62.344 & 89.060 & 61.932 \\
100 & 93.285 & 65.678 & 93.361 & 65.008 \\
200 & 98.306 & 69.081 & 98.493 & 68.225 \\
\bottomrule
\end{tabular}
\endgroup
\end{table}

\begin{table}[!htbp]
\centering
\begingroup
\small
\singlespacing
\caption{Viral load tuning: 10-fold cross-validated Brier score for $P(\mathrm{VL}<80\mid\mathbf{X})$ from WRF at each candidate minimum node size.}
\label{tab:supp_tune_vl}
\begin{tabular}{r c}
\toprule
$n_{\mathrm{node}}$ & Brier score \\
\midrule
20   & \textbf{0.116} \\
50   & 0.123 \\
100  & 0.127 \\
200  & 0.129 \\
500  & 0.132 \\
1000 & 0.134 \\
\bottomrule
\end{tabular}
\endgroup
\end{table}

Table~\ref{tab:supp_vl_brier} summarizes the final 10-fold cross-validated Brier scores for viral suppression. The classification random forest directly targets the binary endpoint, whereas WRF estimates this threshold probability from the fitted ordered-outcome distribution.

\begin{table}[!htbp]
\centering
\begingroup
\small
\singlespacing
\caption{Cross-validated Brier scores for predicting viral suppression, $P(\mathrm{VL}<80\mid\mathbf{X})$.}
\label{tab:supp_vl_brier}
\begin{tabular}{l c}
\toprule
Method & Brier score \\
\midrule
Uninformative model & 0.1400 \\
Logistic regression with restricted cubic splines & 0.1270 \\
WRF ordered-outcome model & 0.1160 \\
Classification random forest & \textbf{0.0848} \\
\bottomrule
\end{tabular}
\endgroup
\end{table}

\subsection{CD4 Calibration}
\label{sec:supp_calibration}

Figure~\ref{fig:supp_calib_cd4} shows calibration plots for QRF and WRF predictions of 6-month CD4. For each clinical threshold ($c=200,350,500$ cells/$\mu$L), participants are grouped by predicted probability $\hat P(\mathrm{CD4}\leq c\mid\mathbf{X})$, and the observed proportion below $c$ is plotted against the mean predicted probability. 
The calibration curves for the QRF and WRF are generally similar across the three CD4 thresholds, consistent with the small differences in the CvM and Brier scores.

\begin{figure}[!htbp]
\centering
\IfFileExists{calibration_cd4.pdf}{%
\includegraphics[width=\linewidth]{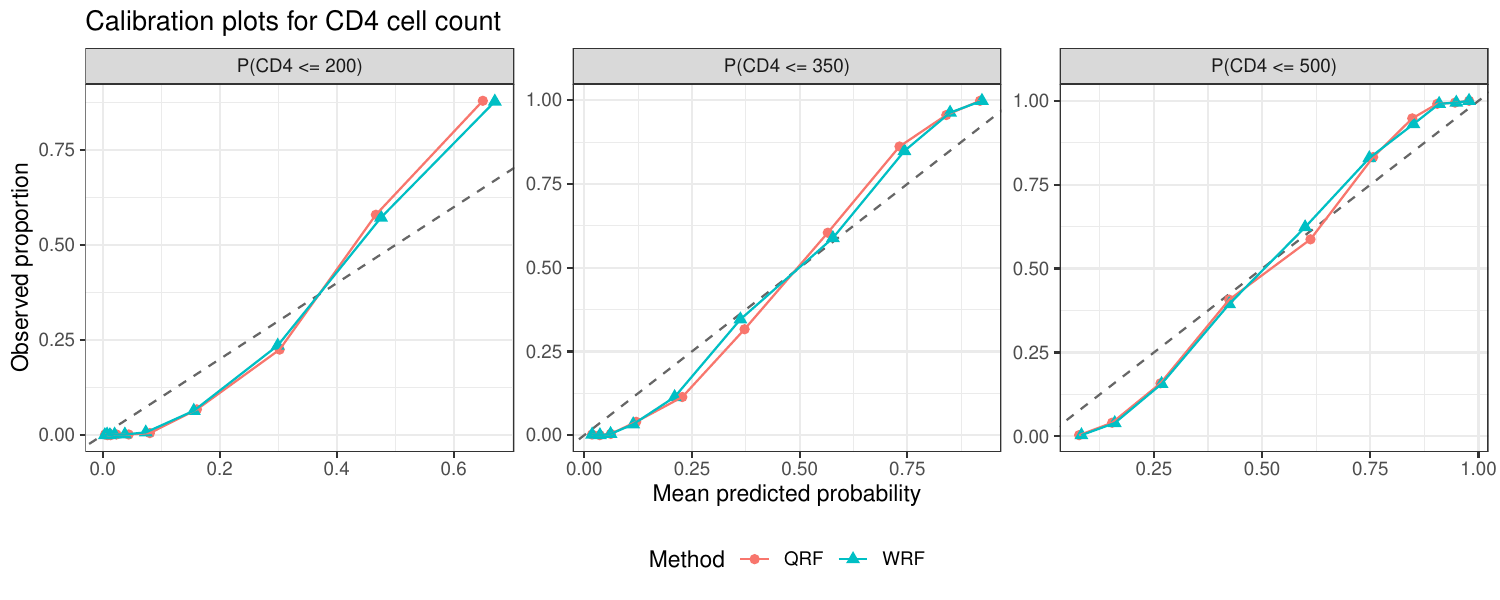}%
}{%
\fbox{\parbox{0.90\linewidth}{\centering Placeholder for \texttt{calibration\_cd4.pdf}.}}%
}
\caption{Calibration plots for 6-month CD4 predictions from QRF and WRF. Each panel plots the observed proportion below a threshold against the mean predicted probability, for thresholds $c=200,350,500$ cells/$\mu$L. The dashed line indicates perfect calibration.}
\label{fig:supp_calib_cd4}
\end{figure}
\end{document}